\documentclass[12pt]{article}

\usepackage{amssymb,amsfonts,amsmath,amsthm,mathtools,bm}

\usepackage{bbm}

\usepackage[svgnames]{xcolor}


\let\SetColor\color


\usepackage{etoolbox}
\makeatletter
\AtBeginEnvironment{align*}{\let\SetColor\@gobble \color{black}}
\AtBeginEnvironment{align}{\let\SetColor\@gobble \color{black}}
\AtBeginEnvironment{aligned*}{\let\SetColor\@gobble \color{black}}
\AtBeginEnvironment{aligned}{\let\SetColor\@gobble \color{black}}
\AtBeginEnvironment{equation*}{\let\SetColor\@gobble \color{black}}
\AtBeginEnvironment{equation}{\let\SetColor\@gobble \color{black}}
\AtBeginEnvironment{underline}{\let\SetColor\@gobble \color{black}}
\AtBeginEnvironment{tabular}{\let\SetColor\@gobble \color{black}}
\makeatother

\definecolor{MyBlue}{RGB}{0,91,148}
\definecolor{MyRed}{RGB}{200,15,62}
\definecolor{MyPurple}{rgb}{0.44, 0.16, 0.39}

\usepackage{caption, subcaption,enumerate,color}
\usepackage[onehalfspacing]{setspace}
\usepackage{graphicx,multicol}
\usepackage[normalem]{ulem}
\usepackage{soul}
\usepackage{adjustbox}

\usepackage{tabu}
\usepackage{makecell,multirow}

\usepackage{titlesec}
  \titleformat*{\section}{\sc \centering \Large}
  \titleformat*{\subsection}{\sc \centering \large}
  \titleformat*{\subsubsection}{\sc \large}
  \titlelabel{\thetitle. }
  \titlespacing*{\section}
  {0pt}{1.5ex plus 0.3ex minus .1ex}{1.5ex plus .1ex}
  \titlespacing*{\subsection}
  {0pt}{1.5ex plus 0.3ex minus .1ex}{1.5ex plus .1ex}
  \titlespacing*{\subsubsection}
  {0pt}{0.75ex plus 0.15ex minus .05ex}{0.75ex plus 0.5ex}

\definecolor{UBCblue}{rgb}{0.0, 0.25, 0.75}

\usepackage{float}
\usepackage[
            colorlinks = true,
            linkcolor  = violet,
            citecolor  = DarkBlue,
            urlcolor   = UBCblue,
            ]{hyperref}


\newcommand{\sep}[0]{ \; |\; }

\newcommand{\supp}[0]{ \text{{\normalfont supp}\;}}


\usepackage[
            justification = centering,
            format        = plain,
            labelfont     = {bf,it},
            textfont      = it,
            belowskip     = -5pt,
            labelsep      = period
            ]{caption}


\usepackage[
            backend      = biber,
            style        = philosophy-classic,
            maxcitenames = 3,
            scauthorsbib = true,
            dashed       = true,
            doi          = false,
            isbn         = false,
            url          = false,
            eprint       = false,
            backref      = true,
            uniquename   = false
            ]{biblatex}

\renewbibmacro{in:}{}
\DeclareDelimFormat[cbx@textcite]{nameyeardelim}{\addspace}
\newcommand\citenop[1]{\citeauthor{#1}, \citeyear{#1}}
\let\cite\textcite

\addbibresource{library.bib}

\DefineBibliographyStrings{english}{
    backrefpage = {\lowercase{p}\adddot },
    backrefpages = {\lowercase{p}p\adddot }
}


\usepackage{tikz}
\usetikzlibrary{patterns,intersections,arrows,decorations.pathreplacing,decorations.pathmorphing,positioning,arrows.meta,calc,decorations.markings,shapes.misc,matrix,shapes,fit,tikzmark}
\usetikzlibrary{calc,shadings}
\usepackage{pgfplots}
\pgfplotsset{compat = 1.16}

\pgfdeclarepatternformonly{north east lines wide}%
   {\pgfqpoint{-1pt}{-1pt}}%
   {\pgfqpoint{10pt}{10pt}}%
   {\pgfqpoint{9pt}{9pt}}%
   {
        \pgfsetlinewidth{0.9pt}
        \pgfpathmoveto{\pgfqpoint{0pt}{0pt}}
        \pgfpathlineto{\pgfqpoint{9.1pt}{9.1pt}}
        \pgfusepath{stroke}
    }

\usepackage[
            margin = 1.18in, 
            top    = 1.18in,
            bottom = 1.18in,
            ]{geometry}
\usepackage[flushmargin]{footmisc}

\setlength{\parindent}{2em}
\setlength{\parskip}{0.5em}
\linespread{1.10}
\interfootnotelinepenalty = 10000
\addtolength{\footnotesep}{2.5mm}
\setlength{\skip\footins}{7.5mm}

\setlength{\emergencystretch}{3em}

\usepackage[strict]{changepage}


\makeatletter
\def\th@plain{%
      \thm@notefont{}
      \itshape 
}
\def\th@definition{%
      \thm@notefont{}
      \normalfont 
}
\makeatother

\newtheorem{theorem}{\sc Theorem}
\newtheorem{proposition}{\sc Proposition}

\newtheorem{lemma}{\sc Lemma}
\newtheorem{lemma-app}{\sc Lemma}[section]

\newtheorem{corollary}{\sc Corollary}
\newtheorem{definition}{\sc Definition}
\newtheorem{definition-app}{\sc Definition}[section]

\theoremstyle{definition}

\newtheorem{example}{\sc Example}

\usepackage{thmtools, thm-restate}


\usepackage{xpatch}

\providecommand{\proofnamefont}{\itshape}
\xpatchcmd{\proof}{\itshape}{\normalfont\proofnamefont}{}{}
\renewcommand{\proofnamefont}{\bfseries}

\usepackage{enumitem}

\setlist[enumerate, 1]{
      itemsep   = 0.05em, 
      parsep    = 0.1em, 
      partopsep = 0.1em, 
      topsep    = 0.1em, 
      }

\setlist[itemize, 1]{
      itemsep   = 0.05em, 
      parsep    = 0.1em, 
      partopsep = 0.1em, 
      topsep    = 0.1em, 
      label     = $\bullet$,
      }

\setlist[itemize, 2]{
      itemsep   = 0.15em, 
      parsep    = 0.4em, 
      partopsep = 0.45em,
      label     = --,
      }

\setlist[itemize, 3]{
      itemsep   = 0.1em, 
      parsep    = 0.4em, 
      partopsep = 0.45em,
      label     = $\triangleright$
      }

%

\def\addlegendimage{\csname pgfplots@addlegendimage\endcsname}

\usepackage{nameref}
\usepackage[capitalise,noabbrev,nameinlink]{cleveref}
\crefname{table}{Table}{tables}
\crefname{equation}{Equation}{equations}
\crefname{lemma-app}{Lemma}{lemmas}
\crefname{claim}{Claim}{claims}
\crefname{problem}{Problem}{problems}
\crefname{example}{Example}{examples}
\crefname{assumption}{Assumption}{Assumptions}
\crefname{definition}{Definition}{Definition}
\crefname{definition-app}{Definition}{Definition}

\newcommand{\comment}[1]{}

\usepackage{algorithm}
\usepackage{algpseudocode}
\algtext*{EndIf}

\DeclareMathOperator*{\argmax}{arg\,max}

\usepackage{comment}

\usepackage[xcolor,leftbars]{changebar}

{\begin{quote}\itshape
  \begin{changebar}\cbcolor{black}\color{black}}%
  {\end{changebar}%
\end{quote}}

\makeatletter
\def\@fnsymbol#1{\ensuremath{%
  \ifcase#1
  \or \star 
  \or $*$        
  \or \dagger
  \or \ddagger
  \or \mathsection
  \or \mathparagraph
  \or \|
  \or \dagger\dagger
  \or \ddagger\ddagger
  \else\@ctrerr
  \fi}}
\makeatother

\begin{document}
\begin{titlepage}
    \title{Persuasion with Verifiable Information\thanks{This paper benefited immensely from the excellent comments and suggestions of the editor, Marzena Rostek, the associate editor, and two anonymous referees. We are indebted to Renee Bowen, Andreas Kleiner, and Joel Sobel for their guidance and support. We also thank Nageeb Ali, Arjada Bardhi, Simone Galperti, Germ\'{a}n Gieczewski, Navin Kartik, Jan Knoepfle, Frédéric Koessler, Remy Levin, Aleksandr Levkun, Elliot Lipnowski, Claudio Mezzetti, Denis Shishkin, Joel Watson, Mark Whitmeyer, as well as numerous seminar and conference audiences, for their helpful feedback. All remaining errors are our own.}}
    \vspace{.2in}
    
    \author{Maria Titova
        \thanks{Corresponding author.}
        \thanks{Department of Economics and Department of Political Science, Vanderbilt University, 2301 Vanderbilt Place, Nashville, TN 37235, USA. Email: \href{mailto:motitova@gmail.com}{\texttt{motitova@gmail.com}}.
        }
        \and Kun Zhang
        \thanks{
            School of Economics, University of Queensland, Level 6, Colin Clark Building 39, Brisbane St. Lucia, QLD 4072, Australia.
            Email: \href{mailto:kun@kunzhang.org}{\texttt{kun@kunzhang.org}}.
        }
    }

    \date{\today}

    \maketitle

    \begin{abstract}
        This paper studies a game in which an informed sender with state-independent preferences uses verifiable messages to convince a receiver to choose an action from a finite set. We characterize the equilibrium outcomes of the game and compare them with commitment outcomes in information design. We provide conditions under which a commitment outcome is an equilibrium outcome and identify environments in which the sender does not benefit from commitment power. Our findings offer insights into the interchangeability of verifiability and commitment in applied settings.
    \end{abstract}

{\sc JEL Classification:} C72; D82; D83

{\sc Keywords}: Communication, persuasion, disclosure, verifiable information, commitment

    \thispagestyle{empty}

\end{titlepage}

\section{Introduction}

Persuasion with verifiable information plays an essential role in many economic settings, including courtrooms, electoral campaigns, product advertising, financial disclosure, and job market signaling. In a courtroom, a prosecutor tries to persuade a judge to convict a defendant by selectively presenting inculpatory evidence. In an electoral campaign, a politician carefully chooses which campaign promises he can credibly make to win over voters. In advertising, a firm convinces consumers to purchase its product by highlighting specific product characteristics. In finance, a CEO discloses certain financial statements and indicators to board members to obtain higher compensation. In a labor market, a job candidate lists specific certifications to make her application more attractive to an employer.

We consider the following model of persuasion with verifiable information. First, the sender (he/him) learns the state of the world. Second, the sender chooses a message, which is a verifiable statement about the state of the world, and sends it to the receiver (she/her). Verifiability requires that any feasible message contain the truth (the true state of the world) but not necessarily the whole truth; the message may also include other states. Upon observing the message, the receiver takes an action from a finite set. The sender's preferences are state-independent and strictly increasing in the receiver's action, whereas the receiver's preferences depend on both her action and the state.

Seminal papers in this literature (e.g., \citenop{Grossman1981}, \citenop{Milgrom1981}) establish an ``unraveling'' result, which states that the sender fully reveals the state in every equilibrium. In these papers, the sender's preferences are strictly monotone in the receiver's action (e.g., he is maximizing quantity sold) and the receiver's action space is rich (e.g., she is choosing a perfectly divisible quantity to buy). The argument goes as follows: the sender who is privately informed about the quality of his product always wants to separate himself from all lower-quality senders, as this separation convinces the receiver to purchase a strictly higher quantity of the product. We note that if the receiver's action space is finite, the sender may not fully reveal the state in every equilibrium.
This is easiest to see when the receiver's action space is binary, such as when she is choosing between buying and not buying. In this case high-quality senders may not mind pooling with some lower-quality senders as long as the receiver chooses to buy.

Our first result characterizes (perfect Bayesian) equilibrium outcomes, which we define as mappings from the state space to a distribution over the receiver's actions.
In \cref{thm:outcome_chara}, we show that every equilibrium outcome must be incentive-compatible (for the sender, IC for short) and obedient (for the receiver).
We say that an outcome is IC if the sender receives at least his complete information payoff in each state; otherwise, he has a profitable deviation toward fully revealing the state.
Obedience requires that if the receiver takes an action with positive probability in some states, it must maximize her expected utility.
The second part of \cref{thm:outcome_chara} adds that if an outcome is deterministic, IC, and obedient, then it is an equilibrium outcome.
A deterministic outcome is one in which the receiver takes some action with probability one in every state. Although not all equilibrium outcomes are deterministic, we show in \cref{lem:mixed_eq} that all equilibria in which the receiver does not mix (e.g., equilibria in which the receiver uses a predetermined tie-breaking rule) induce deterministic outcomes.

In our model, the sender does not have commitment power: he learns the state and then chooses a verifiable message that maximizes his expected payoff in that state.
Our second goal is to understand when the sender can achieve the same payoff in equilibrium as he does in information design (e.g., \citenop{Kamenica2011}).
In information design, the sender commits to a disclosure strategy before learning the state; a commitment outcome is an obedient outcome that maximizes the sender's ex-ante utility.
Our second main result (\cref{thm:det_cmt_outcome}) states that the commitment payoff is achievable in equilibrium if and only if there exists a commitment outcome that is deterministic and IC. Intuitively, commitment outcomes that are IC but nondeterministic are generally not equilibrium outcomes because in a commitment outcome, the receiver typically breaks ties in favor of the sender-preferred action. However, as we mentioned earlier, all equilibria in which the receiver does not mix induce deterministic outcomes.

To determine when a commitment outcome can be implemented as an equilibrium, we must ask when a deterministic and IC commitment outcome exists. 
When the state space is rich, we show that a deterministic commitment outcome always exists (\cref{prop:rich_cmt_outcome}). That commitment outcome is an equilibrium outcome if and only if the sender receives at least his complete information payoff by \cref{thm:det_cmt_outcome}. When the state space is finite, however, {\it all}  commitment outcomes may be nondeterministic (e.g., in the seminal example of \citenop{Kamenica2011}). We show that in a modified game in which the set of available verifiable messages is determined stochastically, it is possible to implement any IC commitment outcome (not only a deterministic one; see \cref{section:SMM}).

Throughout the paper, we illustrate our results for the special case in which the receiver chooses between two actions, a setting commonly used in applications.\footnote{See, for example, \cite{Kolotilin2015}, in which pharmaceutical companies persuade the U.S. Food and Drug Administration to approve drugs; \cite{Ostrovsky2010} and \cite{Boleslavsky2015}, in which schools persuade employers to hire their graduates; \cite{Alonso2016} and \cite{Bardhi2018}, in which politicians persuade voters; and \cite{Gehlbach2014}, in which governments persuade citizens.}
For this case we show that an IC commitment outcome always exists (\cref{prop:2actionsIC}).
Thus, when the receiver has two actions, verifiability and commitment assumptions are interchangeable when the state space is sufficiently rich (\cref{prop:binary_rich,prop:binary_finite_approx}).

\subsubsection*{Related Literature} \label{ss:lit_rev}

The literature on verifiable disclosure (games in which the sender learns the state and then chooses a message out of a state-dependent message space) was pioneered by \cite{Grossman1980}, \cite{Grossman1981}, and \cite{Milgrom1981}; this paper uses the same mapping from states to available messages as in \cite{Milgrom1986}, except in \cref{section:SMM}.\footnote{For detailed surveys of this literature, see, for example, \cite{Milgrom2008} and \cite{Dranove2010}.}

A few recent papers similarly characterize the equilibrium set (or the set of equilibrium payoffs of the sender) and assess the value of commitment in various verifiable disclosure models.
\cite{Zhang2022} focuses on a special case of our model, further assuming that the state space is a unit interval, the receiver has monotone preferences, and the receiver's optimal action only depends on the expected state. Under these assumptions, the information design problem is known to have a bi-pooling solution, which always induces a deterministic commitment outcome. \cite{Zhang2022} provides conditions under which this solution is implementable in equilibrium.
\cite{Ali2024} focus on settings in which the sender favors uncertainty: his preferences are state-dependent and deviations to full revelation are never profitable. They provide conditions under which the sets of equilibrium payoffs of the sender are virtually the same in the disclosure game as in information design.
\cite{GieczewskiTitova} consider a generalized disclosure game with an arbitrary message mapping and focus on coalition-proof equilibria.

Outside of verifiable disclosure models, our paper also relates to the informed information design (IID) literature pioneered by \cite{PerezRichet2014}, especially \citeauthor{KoesslerSkreta2023} (\citeyear{KoesslerSkreta2023}; KS henceforth) and \citeauthor{Zapechelnyuk2023} (\citeyear{Zapechelnyuk2023}; Z henceforth).
In IID, the sender chooses a Blackwell experiment like in information design, except he observes the state of the world before making the choice.
Therefore, in IID, a sender faces additional incentive-compatibility constraints relative to (uninformed) information design, much like in verifiable disclosure.
The key difference between IID and disclosure games is that the sender can use stochastic evidence in IID, whereas his evidence in verifiable disclosure is deterministic.
The differences in equilibrium sets between IID and our model highlight the value of stochastic evidence.\footnote{
    Equilibrium concepts differ across all aforementioned papers; for a direct comparison of our results to those of KS and Z, we use perfect Bayesian equilibrium (PBE) with the refinement of the principle of preeminence of tests, which requires that ``every out-of-equilibrium posterior belief must assign probability one to each event that is revealed as certain by the test'' (Z, p. 1061). The principle of preeminence of tests rules out non-IC PBE because the receiver learns the state when the sender sends a fully informative experiment, and this deviation must be unprofitable. Note that PBE without refinements has no predictive power in IID, meaning that every obedient outcome is a PBE outcome (KS, p. 3197).
}
In unconstrained IID (KS), an obedient outcome is an equilibrium outcome if and only if it is IC.\footnote{
    KS focuses on interim optimal (IO) outcomes, which are PBE outcomes with a restriction on R's off-path beliefs: any such belief must assign positive probability only to states in which the sender strictly benefits from the deviation.
}
In IID constrained to nondegenerate experiments (Z), every obedient outcome is an equilibrium outcome.
We show that in Milgrom-Roberts's verifiable disclosure, an obedient outcome is an equilibrium outcome if and only if it is IC and {\it deterministic} (assuming that the receiver uses a pure strategy, as in KS and Z). 
Thus, the sender values stochastic evidence when the state space is finite but not when it is rich. 
An IID problem can also be interpreted as a verifiable disclosure game with random certification (where the randomization between messages is done by a machine, not the sender).\footnote{We thank Fr{\'e}d{\'e}ric Koessler for pointing this out.} We formalize this observation in \cref{section:SMM} by introducing a verifiable disclosure game with a stochastic message mapping and showing that its equilibrium set is the same as in unconstrained IID.
We describe the relationship between our results and those of KS in more detail throughout the paper.

While we study when the sender does not benefit from commitment power, a growing body of literature examines how much the receiver gains from commitment power by comparing equilibrium outcomes with those of optimal mechanisms in sender-receiver games with verifiable information. When the sender's preferences are state-independent, \cite{GlazerRubinstein2004,GlazerRubinstein2006} and \cite{Sher2011} find that the receiver does not need commitment power to reach the optimal mechanism outcome. \cite{HartEtAl2017} and \cite{Ben-Porath2019} provide conditions under which the equilibrium and optimal mechanism outcomes are equivalent.

\cite{Chakraborty2010}, \cite{Lipnowski2020}, and \cite{Lipnowski2020b} study cheap-talk games in which the sender has state-independent preferences; the latter two compare equilibrium outcomes in one-shot cheap-talk games with commitment outcomes. In cheap-talk games, the sender's messages are not verifiable: in every state, the sender has access to the same (sufficiently rich) set of messages. The verifiability requirement faced by our sender significantly impacts the set of equilibrium outcomes.\footnote{Verifiability of his messages may help or hurt the sender depending on the preferences of the players. In fact, every equilibrium of the verifiable information game may be ex-ante better for the sender than every cheap-talk equilibrium and vice versa.} 
\cite{KamenicaLin2024} show that in standard cheap-talk games with finitely many actions and states (where ``standard'' means that the receiver is uninformed and the sender has no actions other than the choice of a message), generically the commitment payoff is achieved in an equilibrium if and only if there exists a deterministic commitment outcome. Our \cref{thm:det_cmt_outcome} provides a similar result for verifiable disclosure games.

\section{Model} \label{s:model}

We study a game of persuasion with verifiable information between a sender (S, he/him) and a receiver (R, she/her). Below we describe the timing of the game along with the assumptions:\footnote{For a topological space $Y$, let $\Delta(Y)$ denote the set of Borel probability measures on $Y$. For $\gamma \in \Delta Y$, let $\supp \gamma$ denote the support of $\gamma$. We say that $\gamma \in \Delta Y$ is degenerate if $\supp \gamma$ is a singleton, and non-degenerate otherwise.}
\begin{enumerate}
    \item S observes the state of the world, $\theta \in \Theta$.

          The state space $\Theta$ is either finite ($\Theta = \{ 1,\ldots, N \}$, $N \geq 2$) or rich ($\Theta$ is a convex and compact subset of $\mathbb{R}^n$). The state of the world is drawn from a common prior $\mu_0 \in \Delta (\Theta)$ with $\supp \mu_0 = \Theta$. If the state space is rich, we assume that the prior is atomless.

    \item S sends message $m \in  M $ to R, where $ M $ is the collection of nonempty Borel subsets of $\Theta$. Since each message is a subset of the state space, we interpret it as a statement about the state of the world. S's messages are \emph{verifiable} in the sense that every message must contain the truth: the set of messages available to S in state $\theta \in \Theta$ is $\{m \in M \sep \theta \in m\}$.\footnote{We borrow from \cite{Milgrom1986} the definition of a verifiable message as a subset of the state space that includes the realized state. This method satisfies normality of evidence (\citenop{BullWatson2007}), which makes it consistent with both major ways of modeling hard evidence in the literature.}
    \item R observes the message (but not the state) and takes an action from a finite set $J :=\{1, \ldots, K\}$ with $K \ge 2$.
    \item The game ends, and payoffs are realized.

          S's payoff $v: J \to \mathbb{R}$ depends only on R's action.
          Without loss, we assume that actions are ordered such that $v$ is increasing in $j \in J$.
          For ease of exposition, we also assume that $v$ is strictly increasing.

          R's preferences are described by a bounded measurable utility function $u: J \times \Theta \to \mathbb{R}$. We define R's {\it complete information action-$j$ set} as $A_j := \{ \theta \in \Theta \sep u(j,\theta) \geq u(j',\theta) \text{ for all } j' \in J \}$ to include all the states of the world in which she prefers to take action $j$ under complete information. 
\end{enumerate}

We consider perfect Bayesian equilibria (henceforth equilibria) of this game. First, S's strategy is a function $\sigma: \Theta \to \Delta_0 M $, where $\Delta_0 M $ is the set of probability measures on $ M $ with a finite support.\footnote{
    That is, we assume that S mixes between finitely many messages. This assumption imposes no restriction when $\Theta$ is finite. When $\Theta$ is rich, it guarantees that $\sigma(\cdot \sep \theta)$ is well-defined, and the restriction does not affect the set of achievable equilibrium payoffs.
}
Second, R's strategy is a function $\tau:  M  \to \Delta J$. Finally, R's belief system $q:  M  \to \Delta \Theta$ describes R's beliefs about the state after any observed message.

\begin{definition}\label{dfn:model-equilibrium}
    A triple $(\sigma, \tau, q)$ is an \emph{equilibrium} if
    \begin{enumerate}[label=(\roman*)]
        \item for all $\theta \in \Theta$, $\sigma (\cdot\sep \theta)$ is supported on $\argmax\limits_{\{m \in  M  \sep \theta \in m\}} \sum\limits_{j \in J} v(j) \, \tau(j \sep m)$;\label{eqm_cond_i}
        \item for all $m \in  M $, $\tau (\cdot \sep m)$ is supported on $\argmax\limits_{j \in J} \label{eqm_cond_ii}\int\limits_{\Theta} u(j,\theta) \, \mathrm{d}q(\theta \sep m)$;
        \item $q$ is obtained from $\mu_0$, given $\sigma$, using Bayes' rule whenever possible;\footnote{That is, $q$ is a regular conditional probability system.} \label{eqm_cond_iii}
        \item for all $m \in  M $, $q (\cdot \sep m) \in \Delta m$. \label{eqm_cond_iv}
    \end{enumerate}
\end{definition}

In words, in equilibrium, \ref{eqm_cond_i} S chooses verifiable messages that maximize his expected utility in every state $\theta \in \Theta$; \ref{eqm_cond_ii} R maximizes her expected utility given her posterior belief; \ref{eqm_cond_iii} R uses Bayes' rule to update her beliefs whenever possible; and \ref{eqm_cond_iv} R's posteriors are consistent with disclosure on and off the path.

To analyze the model, we use the following approach.
Let $\Psi$ be the set of all Borel measurable functions from $\Theta$ to $\Delta J$. We refer to any $\alpha \in \Psi$ as an \emph{outcome}; it specifies, for each state $\theta \in \Theta$, the probability $\alpha(j \sep \theta)$ that R takes action $j \in J$. Given a pair of strategies $(\sigma,\tau)$ of S and R, we let $M_j(\sigma,\tau) := \{ m \in  M  \sep m \in \supp \sigma(\cdot\sep\theta) \text{ for some } \theta\in\Theta \text{ and } \tau(j \sep m) > 0\}$ be the set of messages that convince R to take action $j \in J$, sent with a positive probability in some state $\theta \in \Theta$. We say that $\alpha \in \Psi$ is an \emph{equilibrium outcome} if there exists an equilibrium $(\sigma,\tau,q)$ that {\it induces} it, meaning that $\alpha(j \sep \theta) = \sum\limits_{m \in M_j(\sigma,\tau)} \tau(j \sep m) \sigma(m \sep \theta)$ for all $j \in J$ and $\theta \in \Theta$.

We say that an outcome $\alpha \in \Psi$ is \emph{deterministic} if $\alpha(\cdot \sep \theta)$ is degenerate for each $\theta \in \Theta$. For a deterministic outcome $\alpha$, we refer to the collection of sets $\{ W_j \}_{j \in J}$, where $W_j := \{ \theta \in \Theta \sep \alpha(j \sep \theta) = 1 \}$, as the \emph{outcome partition} (into subsets $W_j$ of the state space in which R takes action $j \in J$ with probability one) of $\alpha$.

Given an outcome $\alpha$, we let $v_{\alpha}(\theta) := \sum\limits_{j \in J} v(j) \, \alpha(j \sep \theta)$ be S's interim (expected) payoff in state $\theta \in \Theta$ and $V_\alpha := \int\limits_\Theta v_{\alpha}(\theta) \, \mathrm{d}\mu_0(\theta)$ be S's ex-ante utility.

\section{Equilibrium Analysis} \label{s:analysis}

We begin by establishing the lower bound on S's payoff in an equilibrium outcome $\alpha$. One thing that S can do in state $\theta$ is fully reveal it by sending message $\{ \theta \}$ with probability one. Upon receiving message $\{ \theta \}$, R learns that the state is $\theta$ and takes an action that is a best response under complete information. Thus, S's equilibrium payoff in state $\theta$ is bounded below by $\underline{v}(\theta):= \min\limits_{j\in J \text{ s.t. }\theta \in A_j} v(j)$. We refer to this condition as S's IC constraint:\footnote{In fact, $\underline{v}(\theta)$ is the lower bound on S's equilibrium payoff in state $\theta$, meaning there exists an equilibrium in which S's interim payoff is exactly $\underline{v}(\theta)$ for each $\theta \in \Theta$. In this equilibrium, S fully reveals every state, R takes the lowest action that is a best response under complete information, and R's beliefs are skeptical off-path (we define R's skeptical beliefs in the proof of \cref{thm:outcome_chara}).}
\begin{equation}\tag{IC$_\theta$}\label{IC-theta}
    v_\alpha(\theta) \geq \underline{v}(\theta). \vspace{-.3in}
\end{equation}
\begin{definition}
    An outcome $\alpha$ is \emph{incentive-compatible} (IC) if it satisfies \eqref{IC-theta} for each state $\theta\in\Theta$.
\end{definition}

Next, in equilibrium, if R finds it optimal to play action $j$ after several messages, that action must remain optimal even if R does not know which of these messages was sent. 
We can thus ``bundle'' all these messages into a single ``recommendation,'' giving rise to R's obedience constraint for action $j$:
\begin{equation}\tag{obedience$_j$}\label{obedience}
    \int\limits_{\Theta} ( u(j,\theta) - u(j',\theta) ) \alpha(j \sep  \theta) \, \mathrm{d}\mu_0 (\theta) \geq 0, \quad \text{for all } j' \in J.
\end{equation}
~\\[-2\baselineskip]

\begin{definition}
    An outcome $\alpha$ is \emph{obedient} if it satisfies \eqref{obedience} for each action $j \in J$.
\end{definition}

If $\alpha$ is a deterministic outcome with partition $\{ W_j \}_{j\in J}$, then \eqref{IC-theta} becomes $\theta \in W_j \implies v(j) \geq \underline{v}(\theta) \iff j \geq \min\limits_{i \in J \text{ s.t. } \theta \in A_i} i$, indicating that the action taken in state $\theta$ must be no lower than R's worst best response under complete information. The obedience constraint for action $j$ simplifies to $\int\limits_{W_j} ( u(j,\theta) - u(j',\theta) ) \, \mathrm{d}\mu_0 (\theta) \geq 0$ for all $j' \in J$.

Our first result confirms that every equilibrium outcome is IC and obedient. For deterministic outcomes, these two properties are necessary and sufficient for equilibrium implementation.

\begin{theorem} \label{thm:outcome_chara}~
    \begin{enumerate}[label=(\alph*)]
        \item Every equilibrium outcome is IC and obedient. \label{thm1a}
        \item If a deterministic outcome is IC and obedient, then it is an equilibrium outcome. \label{thm1b}
    \end{enumerate}
\end{theorem}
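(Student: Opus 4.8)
The plan is to prove the two parts by opposite strategies: \cref{thm1a} by showing that every equilibrium outcome survives two families of deviations, and \cref{thm1b} by explicitly constructing an equilibrium that induces the given outcome. For the IC half of \cref{thm1a}, the key observation is that in every state $\theta$ the message $\{\theta\}$ is available to S. By condition \ref{eqm_cond_iv}, any belief $q(\cdot\sep\{\theta\})$ lies in $\Delta\{\theta\}$, hence is the point mass at $\theta$; by \ref{eqm_cond_ii}, R then best-responds under complete information, taking an action in $\{j\sep\theta\in A_j\}$ and delivering S at least $\underline{v}(\theta)$. Since by \ref{eqm_cond_i} S's equilibrium play in state $\theta$ maximizes his payoff over available messages, $v_\alpha(\theta)$ is at least this deviation payoff, which is exactly \eqref{IC-theta}.

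For obedience I would bundle messages by the action they induce. Fix $j$ and take any $m \in M_j(\sigma,\tau)$; such an $m$ is on-path, so Bayes' rule \ref{eqm_cond_iii} applies and, since $\tau(j\sep m)>0$, condition \ref{eqm_cond_ii} yields $\int_\Theta (u(j,\theta)-u(j',\theta))\,\mathrm{d}q(\theta\sep m) \geq 0$ for every $j'$. Writing the posterior via the disintegration $\mathrm{d}q(\theta\sep m) \propto \sigma(m\sep\theta)\,\mathrm{d}\mu_0(\theta)$, multiplying each inequality by the (nonnegative) prior probability of $m$ times $\tau(j\sep m)$, and summing over the finitely many $m \in M_j(\sigma,\tau)$, the weight on the integrand collapses to $\sum_{m} \tau(j\sep m)\sigma(m\sep\theta) = \alpha(j\sep\theta)$, giving \eqref{obedience}. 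Interchanging the finite sum and the integral is justified by the finite-support assumption on $\sigma$.

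For \cref{thm1b}, let $\alpha$ be deterministic, IC, and obedient with partition $\{W_j\}_{j\in J}$. I would construct an equilibrium as follows: S sends the message $W_j$ in every state $\theta \in W_j$; upon the on-path message $W_j$, R holds the Bayesian posterior $\mu_0(\cdot\sep W_j)$ and plays $j$, which is optimal precisely because \eqref{obedience} rearranges to $\mathbb{E}[u(j,\theta)\sep W_j] \geq \mathbb{E}[u(j',\theta)\sep W_j]$ for all $j'$. The disjointness of the partition is what makes this work: the only on-path message available to S in state $\theta$ is the cell $W_j$ that contains $\theta$, so no deviation to another cell is feasible. For every off-path message $m$ I would equip R with the \emph{skeptical} belief concentrated on a state $\theta_m \in \argmin_{\theta \in m} \underline{v}(\theta)$ (the minimum is attained because $\underline{v}$ takes only finitely many values), after which R plays the worst complete-information best response at $\theta_m$, yielding S exactly $\min_{\theta\in m}\underline{v}(\theta)$.

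It remains to check S's off-path incentives, and this is where IC does the work: for any $\theta \in W_j$ and any available $m \ni \theta$, the deviation payoff is $\min_{\theta'\in m}\underline{v}(\theta') \leq \underline{v}(\theta) \leq v_\alpha(\theta) = v(j)$, so no deviation beats the prescribed message. Conditions \ref{eqm_cond_iii} and \ref{eqm_cond_iv} hold by construction, and the induced outcome is $\alpha$ by inspection. The main obstacle I anticipate is the rich-state case: a partition cell $W_j$ may have $\mu_0(W_j)=0$, so that $W_j$ is itself off-path and the Bayesian posterior $\mu_0(\cdot\sep W_j)$ is undefined. Handling such null cells requires exhibiting some belief in $\Delta W_j$ under which $j$ remains a best response, and checking that this does not disturb \eqref{obedience} for the positive-measure cells; I expect this to be the only delicate point, since elsewhere the skeptical-belief construction is routine (and in the finite case $\supp\mu_0=\Theta$ rules null cells out entirely).
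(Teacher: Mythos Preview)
Your proposal is correct and follows essentially the same route as the paper: the IC and obedience arguments for part (a) match, and for part (b) your skeptical belief (a point mass at some $\theta_m\in\argmin_{\theta\in m}\underline v(\theta)$) is equivalent to the paper's choice $q(\cdot\sep m)\in\Delta(m\cap A_{\underline j})$ with $\underline j:=\min\{i:m\cap A_i\neq\varnothing\}$, since both lead R to play $\underline j$ and give S the deviation payoff $v(\underline j)=\min_{\theta\in m}\underline v(\theta)$. The null-measure-cell subtlety you flag in the rich case is real but is not addressed in the paper's own proof either.
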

\begin{proof}
    [Part \ref{thm1a}] Consider an equilibrium $(\sigma,\tau,q)$ with outcome $\alpha \in \Psi$. Observe that $\alpha$ must be IC, or else there exists a state $\theta$ in which S has a profitable deviation to fully revealing the state. Next, we show that $\alpha$ is also obedient. Consider any action $j \in J$. By the equilibrium condition \ref{eqm_cond_ii}, we have
    \begin{align*}
        \text{ for all } m \in  M_j(\sigma,\tau) \text{ and } j' \in J, \quad
         & \int\limits_\Theta \left(u(j,\theta) - u(j',\theta)\right) \,\mathrm{d}q(\theta \sep m) \geq 0         \\
        \implies
         & \int\limits_\Theta ( u(j,\theta) - u(j',\theta) ) \tau(j \sep m) \,\mathrm{d}q(\theta \sep m) \geq  0,
    \end{align*}
    where the second inequality follows because $\tau(j \sep m) > 0$ for all $m \in  M_j(\sigma,\tau)$. Using Bayes' rule, the above inequality implies that
    \begin{align*}
        \text{for all } j' \in J, \quad
         & \int\limits_\Theta ( u(j,\theta) - u(j',\theta) ) \sum\limits_{m \, \in \, M_j(\sigma,\tau)} \tau(j \sep m) \sigma(m \sep \theta) \,\mathrm{d}\mu_0(\theta) \geq 0, \\
        \implies
         & \int\limits_\Theta ( u(j,\theta) - u(j',\theta) )\alpha(j \sep \theta) \,\mathrm{d}\mu_0(\theta) \geq 0,
    \end{align*}
    where the last inequality is \eqref{obedience}. Since $j$ was chosen arbitrarily, $\alpha$ is obedient.

        [Part \ref{thm1b}] Consider a deterministic outcome $\alpha$ that is IC and obedient, and denote its outcome partition by $\{ W_j \}_{j \in J}$.
    We construct an equilibrium $(\sigma,\tau,q)$ that induces $\alpha$.
    Let S's strategy be $\sigma(m \sep \theta) = \mathbbm{1}(m = W_j \text{ and }\theta \in W_j)$, which reveals the element of the outcome partition that the realized state belongs to.
    When R receives an on-path message $W_j$, she learns that $\theta \in W_j$ and nothing else; by \eqref{obedience}, playing action $j$ is a best response; thus, we let $\tau(j \sep W_j) = 1$ for all $j \in J$.
    For off-path messages, assume R is ``skeptical'' and believes that any unexpected message comes from the state in which R prefers to take the lowest action under complete information.
    Formally, for all $m \notin \{ W_j \}_{j \in J}$, let $q(\cdot \sep m) \in \Delta(m \cap A_{\underline{j}})$, where $\underline{j} \in J$ is the lowest action $i \in {J}$ such that the set $m \cap A_{i}$ is nonempty. Then playing action $\underline{j}$ with probability one is a best response to message $m$, so we let $\tau(\underline{j}\sep m) = 1$.

    We now show that S has no profitable deviations using the fact that $\{ W_j \}_{j\in J}$ is a partition of the state space.
    Consider a state $\theta \in \Theta$, which is in $W_j$ for some action $j \in J$.
    S cannot send any other on-path message because $\theta \in W_j$ implies $\theta \notin W_i$ for any $i \neq j$. Therefore $W_i$ is not a verifiable message in state $\theta$.
    If S deviates to an off-path (verifiable) message $m \notin \{ W_j \}_{j \in J}$, then S's payoff is $v(\underline{j}) \leq \underline{v}(\theta)$, and this deviation is unprofitable by \eqref{IC-theta}. Therefore, $(\sigma,\tau,q)$ is an equilibrium that induces $\alpha$.
\end{proof}

Part \ref{thm1b} of \cref{thm:outcome_chara} characterizes the set of deterministic equilibrium outcomes, and its proof suggests a simple way of implementing these outcomes in a pure-strategy equilibrium with at most $K$ on-path messages that essentially serve as action recommendations. Specifically, if $\{ W_j \}_{j \in J}$ is an outcome partition, then $W_j$ serves as both the set of states in which R plays action $j$ and the on-path message recommending action $j$ in the constructed equilibrium inducing this outcome.

While \cref{thm:outcome_chara} fully characterizes the set of deterministic equilibrium outcomes, it does not provide a full characterization of the entire set of equilibrium outcomes. In general, an IC and obedient nondeterministic outcome may or may not be an equilibrium outcome. Consider the seminal example from \cite{Kamenica2011}.

\begin{example} \label{example:KG11}
    Suppose S is a prosecutor and R is a judge. The state of the world is binary: $\Theta = \{ 1,2 \} = \{\text{innocent}, \text{guilty}\}$;
    R's action space is binary: $J = \{ 1,2 \} = \{\text{acquit}, \text{convict}\}$; and the prior is $\mu_0(1) = 0.7$. S's preferences are $v(1) = 0$ and $v(2) = 1$, and R's objective is to ``match the state'': $u(1, 1) = u(2, 2) = 1$, and $u(1, 2) = u(2, 1) = 0$.  Consider an outcome $\alpha^*$ in which $\alpha^*(2 \sep 2) = 1$ and $\alpha^*(2 \sep 1) = 3/7$. It is easy to verify that $\alpha^*$ is both IC and obedient. However, $\alpha^*$ is not an equilibrium outcome: when $\theta = 1$, R convicts with probability $3/7$ and acquits with probability $4/7$. Since S strictly prefers conviction, he has a profitable deviation to sending the message after which R convicts when $\theta = 1$.
\end{example}

\cref{example:KG11} illustrates that (IC and obedient) outcomes in which S receives different payoffs from different messages in the same state cannot be equilibrium outcomes.
Once the state is realized, S's message space becomes fixed and known. 
Thus, if S mixes between multiple messages in the same state, he must receive the same payoff from each of these messages. 
Of course, if S does receive the same payoff in every state, then an IC, obedient, and nondeterministic outcome could be an equilibrium outcome. However, in any such equilibrium, R must play a mixed strategy:
\begin{lemma}\label{lem:mixed_eq}
    Suppose that $\alpha$ is a nondeterministic outcome induced by an equilibrium $(\sigma,\tau,q)$. Then in each state $\theta \in \Theta$ such that $\alpha(\cdot \sep \theta)$ is nondegenerate, R is playing a mixed strategy (meaning $\tau(\cdot \sep m)$ is nondegenerate) for some $m\in \supp \sigma(\cdot \sep \theta)$.
\end{lemma}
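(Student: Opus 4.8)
The plan is to prove the contrapositive: fixing a state $\theta$ with $\alpha(\cdot \sep \theta)$ nondegenerate, I show that R cannot be playing a pure strategy after all messages in $\supp \sigma(\cdot \sep \theta)$. Equivalently, I assume that $\tau(\cdot \sep m)$ is degenerate for every $m \in \supp \sigma(\cdot \sep \theta)$ and derive that $\alpha(\cdot \sep \theta)$ must then be degenerate. At the outset I would record two bookkeeping facts: first, since $\sigma(\cdot \sep \theta) \in \Delta_0 M$ has finite support, the set $\supp \sigma(\cdot \sep \theta)$ is finite, so all sums below are finite; second, because $\sigma(m \sep \theta) = 0$ for $m \notin \supp \sigma(\cdot \sep \theta)$, the induced-outcome formula simplifies to $\alpha(j \sep \theta) = \sum_{m \in \supp \sigma(\cdot \sep \theta)} \tau(j \sep m)\, \sigma(m \sep \theta)$.

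The core of the argument combines the equilibrium indifference condition \ref{eqm_cond_i} with the strict monotonicity of $v$. Under the assumption that each $\tau(\cdot \sep m)$ is degenerate, every message $m \in \supp \sigma(\cdot \sep \theta)$ induces a single deterministic recommendation $j_m \in J$ with $\tau(j_m \sep m) = 1$, so S's interim payoff from $m$ is exactly $\sum_{j \in J} v(j)\, \tau(j \sep m) = v(j_m)$. Condition \ref{eqm_cond_i} requires $\sigma(\cdot \sep \theta)$ to be supported on messages that maximize this payoff, and hence all messages in the support deliver the same value $v(j_m)$.

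Finally I would invoke that $v$ is strictly increasing, hence injective: the equality $v(j_m) = v(j_{m'})$ forces $j_m = j_{m'}$, so there is a common action $j^\ast$ with $j_m = j^\ast$ for all $m \in \supp \sigma(\cdot \sep \theta)$. Substituting into the simplified outcome formula gives $\alpha(j^\ast \sep \theta) = \sum_{m \in \supp \sigma(\cdot \sep \theta)} \sigma(m \sep \theta) = 1$, so $\alpha(\cdot \sep \theta)$ is degenerate, completing the contrapositive. I do not anticipate a real obstacle: the argument is short once one observes that S's payoff from a message is determined entirely by R's (pure) response and is state-independent. The only genuinely load-bearing hypothesis is that $v$ is \emph{strictly} increasing --- this is precisely what prevents S from mixing across two messages that induce distinct pure recommendations, and it is the step where the lemma would fail under merely weak monotonicity.
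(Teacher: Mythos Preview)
Your proposal is correct and mirrors the paper's proof almost exactly: both assume $\tau(\cdot\sep m)$ is degenerate for every $m\in\supp\sigma(\cdot\sep\theta)$, use equilibrium condition \ref{eqm_cond_i} to conclude all such messages yield the same $v(j_m)$, invoke strict monotonicity of $v$ to force a common $j^\ast$, and then compute $\alpha(j^\ast\sep\theta)=1$. The only cosmetic difference is that you phrase it as a contrapositive while the paper phrases it as a contradiction, and you are slightly more explicit about the finiteness of $\supp\sigma(\cdot\sep\theta)$ and the injectivity of $v$.
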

\begin{proof}
    Let $\theta \in \Theta$ be a state such that $\alpha(\cdot \sep \theta)$ is nondegenerate. By contradiction, suppose that $\tau(\cdot \sep m)$ is degenerate for all $m \in \supp \sigma(\cdot \sep \theta)$.
    By equilibrium condition \ref{eqm_cond_i}, for any pair of messages $m,m' \in \supp \sigma(\cdot \sep \theta)$, we have $\sum\limits_{j \in J} v(j) \tau (j \sep m) = \sum\limits_{j \in J} v(j) \tau (j \sep m')$, implying that there exists an action $j^* \in J$ such that $\tau(j^* \sep m) = \tau(j^* \sep m') = 1$.
    In other words, if R is not mixing, every message sent by S in state $\theta$ leads R to take the same action.
    Therefore, $\alpha(j^* \sep \theta) = \sum\limits_{m \in M_j(\sigma,\tau)} \tau(j^* \sep m) \sigma(m \sep \theta) = 1$, which is a contradiction.
\end{proof}

The contrapositive of \cref{lem:mixed_eq} also tells us that if R is not mixing in an equilibrium (e.g., if she uses an exogenously given tie-breaking rule like in IID), then an obedient outcome is an equilibrium outcome {\it if and only if} it is IC and deterministic.
\cref{thm:outcome_chara} and \cref{lem:mixed_eq} together highlight the difference in equilibrium sets between our verifiable disclosure game and IID (KS and Z).
KS's characterization (Proposition 2) states that an outcome is interim optimal (IO) if and only if it is obedient and IOC, where IOC essentially requires that for every set of states $Q$, and for every state in $Q$, S does not strictly prefer R having a belief supported on $Q$.\footnote{In contrast, IC only requires that in any given state, S does not strictly prefer inducing the degenerate belief at that state. Interestingly, KS also show that IOC and IC are equivalent if S’s value function is quasiconvex in R's belief (Proposition 3) or if R chooses between two actions (Lemma B.2).}
Naturally, the first difference---IOC in KS's setting versus IC in ours---arises from the difference in equilibrium selection, as they impose a stronger restriction on off-path beliefs than we do.
The second difference is that IOC and obedience are necessary and sufficient for an outcome to be IO, whereas for us IC and obedience are not sufficient.
Since in our model S chooses messages, an additional restriction applies: S can mix between different messages only if each message yields the same expected payoff---a constraint absent in IID.
For this reason, some nondeterministic IO, and thus IC, outcomes are not equilibrium outcomes in our game (e.g., one from \cref{example:KG11}).

\cref{thm:outcome_chara} characterizes all pure-strategy equilibria of the game, as these equilibria are deterministic. \cite{KoesslerRenault2012} find that IC and obedience are necessary and sufficient for a pure-strategy outcome to be an equilibrium outcome in a setting in which S has state-independent preferences, sends verifiable messages, and sets a price and R chooses between two actions. \cref{thm:outcome_chara} highlights that this result (1) extends to cases in which R has more than two actions and (2) is not driven by S's additional choice variable (price).\footnote{However, as the authors point out, the price choice in their setting ensures that R plays a pure strategy in equilibrium.}

\section{Value of Commitment}\label{section:4}

In this section, we ask when a commitment outcome, a solution to the information design problem, is also an equilibrium outcome.
In the information design problem, Stage 1 of the game (in which S learns the state) is removed, and Stage 2 of the game (in which S chooses a verifiable message) is replaced by S committing to an experiment that sends signals depending on state realizations.\footnote{
    An experiment $(\mathcal{S}, \chi)$ consists of a compact metrizable space $\mathcal{S}$ of signals and a Borel measurable function $\chi: \Theta \to \Delta \mathcal{S}$.
    R observes the choice of the experiment and a signal realization $s \in \mathcal{S}$ drawn from $\chi(\cdot \sep \theta)$, where $\theta$ is the realized state.
} Importantly, when S has commitment power, he no longer faces incentive-compatibility constraints, i.e., he does not need to maximize his utility state by state.

Following \cite{Kamenica2011}, we focus on straightforward signals that R interprets as action recommendations. Therefore, an (optimal) {\it commitment outcome} $\overline{\psi} \in \Psi$ solves
\begin{equation}\label{eqn:SPcommitment_outcome}\tag{CO}
    \begin{aligned}
        \max\limits_{\psi \in \Psi} \  V_\psi \quad
         & \text{ subject to, for each action } j \in J,                                                      \\
         & \int\limits_{\Theta} ( u(j,\theta) - u(j',\theta) ) \psi(j \sep \theta) \,\mathrm{d}\mu_0 (\theta)
        \geq 0
        \quad \text{for all } j' \in J.
    \end{aligned}
\end{equation}
Simply put, a commitment outcome is an obedient outcome that maximizes S's ex-ante utility. We refer to the value of problem \eqref{eqn:SPcommitment_outcome} as the \emph{commitment payoff}. Our second result shows that a commitment outcome must be deterministic and IC to be an equilibrium outcome.

\begin{theorem} \label{thm:det_cmt_outcome}
    Consider a commitment outcome $\overline{\psi} \in \Psi$.
    \begin{enumerate}[label=(\alph*)]
        \item If $\overline{\psi}$ is IC and deterministic, then it is an equilibrium outcome.\label{thm2:a}
        \item If $\overline{\psi}$ is an equilibrium outcome, then it is IC and $\mu_0$-almost everywhere deterministic.\label{thm2:b}
    \end{enumerate}
\end{theorem}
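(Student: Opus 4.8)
Part~\ref{thm2:a} I expect to follow immediately from \cref{thm:outcome_chara}. A commitment outcome is obedient by definition, since obedience is precisely the constraint in problem~\eqref{eqn:SPcommitment_outcome}. Hence if $\overline{\psi}$ is additionally IC and deterministic, then it is a deterministic, IC, and obedient outcome, and \cref{thm:outcome_chara}\ref{thm1b} gives that it is an equilibrium outcome. There is nothing else to do.

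For Part~\ref{thm2:b}, the IC half is also immediate: by \cref{thm:outcome_chara}\ref{thm1a} every equilibrium outcome is IC, so $\overline{\psi}$ is IC. The substance is showing that $\overline{\psi}$ is $\mu_0$-almost everywhere deterministic, and here I would argue by contradiction. Fix an equilibrium $(\sigma,\tau,q)$ inducing $\overline{\psi}$ and suppose the set $\Theta^{*}:=\{\theta\in\Theta\sep\overline{\psi}(\cdot\sep\theta)\text{ is nondegenerate}\}$ satisfies $\mu_0(\Theta^{*})>0$. The plan is to build another \emph{obedient} outcome that yields S a strictly higher ex-ante payoff, contradicting the optimality of $\overline{\psi}$ in~\eqref{eqn:SPcommitment_outcome}.

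The construction makes R break every tie in favor of the sender-preferred (higher) action. Since R best-responds at each on-path message $m$, every action in $\supp\tau(\cdot\sep m)$ is optimal against $q(\cdot\sep m)$; set $j^{*}(m):=\max\supp\tau(\cdot\sep m)$ and let $\tau'(j^{*}(m)\sep m)=1$. Let $\psi'$ be the outcome induced by $(\sigma,\tau')$, under which R takes a pure action at every message. I then need two facts. \emph{Obedience of $\psi'$}: since $j^{*}(m)$ is a best response at every $m$ with $j^{*}(m)=j$, action $j$ remains a best response against the posterior obtained by bundling all such messages; this is the same Bayes-rule bundling step used to prove \cref{thm:outcome_chara}\ref{thm1a}, now applied to $\tau'$. \emph{Strict improvement}: because $v$ is strictly increasing and $j^{*}(m)$ is the largest action in $\supp\tau(\cdot\sep m)$, we have $v(j^{*}(m))\ge\sum_{j\in J}v(j)\tau(j\sep m)$ for every $m$, strictly so exactly when $\tau(\cdot\sep m)$ is nondegenerate. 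By \cref{lem:mixed_eq}, for every $\theta\in\Theta^{*}$ there is some $m\in\supp\sigma(\cdot\sep\theta)$ at which R mixes, so $v_{\psi'}(\theta)>v_{\overline{\psi}}(\theta)$ on $\Theta^{*}$ and $v_{\psi'}(\theta)\ge v_{\overline{\psi}}(\theta)$ elsewhere. Integrating against $\mu_0$ and using $\mu_0(\Theta^{*})>0$ gives $V_{\psi'}>V_{\overline{\psi}}$, and since $\psi'$ is obedient this contradicts $\overline{\psi}$ solving~\eqref{eqn:SPcommitment_outcome}. Thus $\mu_0(\Theta^{*})=0$.

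The step I expect to be most delicate is the obedience of $\psi'$: when $\Theta$ is rich the family of on-path messages may be large, so aggregating over all $m$ with $j^{*}(m)=j$ while preserving the best-response inequality should be made rigorous by a disintegration / regular-conditional-probability argument, exactly as in the proof of \cref{thm:outcome_chara}\ref{thm1a}. The strict-improvement step is the conceptual heart but technically light: it is precisely \cref{lem:mixed_eq} that turns ``nondeterministic on a positive-measure set'' into a genuine ex-ante gain from breaking ties upward, since it guarantees R is actually mixing (rather than the outcome merely looking random across distinct messages) in each nondegenerate state. The remaining points---measurability of $\Theta^{*}$ and of $\psi'$---are routine.
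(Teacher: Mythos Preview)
Your proposal is correct and follows essentially the same route as the paper's proof: Part~\ref{thm2:a} via \cref{thm:outcome_chara}\ref{thm1b}, and Part~\ref{thm2:b} by invoking \cref{lem:mixed_eq}, replacing $\tau$ with the tie-breaking strategy $\tau'$ that selects the highest action in $\supp\tau(\cdot\sep m)$, and then verifying obedience of the induced outcome by repeating the bundling argument from \cref{thm:outcome_chara}\ref{thm1a} (the paper phrases this last step as noting $M_j(\sigma,\tau')\subseteq M_j(\sigma,\tau)$). The paper additionally observes that $v_{\psi'}(\theta)=v_{\overline{\psi}}(\theta)$ (not merely $\ge$) for $\theta\notin\Theta^{*}$, but your weaker inequality suffices.
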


\begin{proof}

    [Part \ref{thm2:a}] Recall that every commitment outcome is obedient. Therefore, if $\overline{\psi}$ is IC and deterministic, Part \ref{thm1b} of \cref{thm:outcome_chara} implies that it is an equilibrium outcome.

        [Part \ref{thm2:b}] Suppose a commitment outcome $\overline{\psi}$ is an equilibrium outcome, meaning that there exists an equilibrium $(\sigma,\tau,q)$ that induces it. By \cref{thm:outcome_chara}, $\overline{\psi}$ is IC. We will now show that $\overline{\psi}$ is deterministic $\mu_0$-almost everywhere. Define $T := \{ \theta \in \Theta \sep \overline{\psi}(\cdot \sep \theta) \text{ is nondegenerate} \}$ as the set of states in which R plays multiple actions, and suppose, by contradiction, that $\mu_0(T) > 0$. By \cref{lem:mixed_eq}, for each $\theta \in T$, there exists a message $m \in \supp \sigma(\cdot \sep \theta)$ such that $\tau(\cdot \sep m)$ is nondegenerate.
    Let $\widetilde{M} := \{ m \in  M  \sep \tau(\cdot \sep m) \text{ is nondegenerate} \}$ be the set of messages after which R plays a mixed strategy.
    Define $\widetilde{\tau}(j^* \sep m) := \mathbbm{1}( j^* = \max\limits_{j \in \supp \tau(\cdot \sep m) } j )$ for all $m \in M$ as R's strategy that breaks all ties in $\tau$ in favor of S. Denote the outcome from the strategy profile $(\sigma,\widetilde{\tau})$ by $\widetilde{\psi}$.

    We derive a contradiction by showing that $\widetilde{\psi}$ is an obedient outcome with $V_{\widetilde{\psi}} > V_{\overline{\psi}}$, which implies that $\overline{\psi}$ is not a commitment outcome.
    Indeed, we have $v_{\widetilde{\psi}}(\theta) > v_{\overline{\psi}} (\theta)$ for all $\theta \in T$ (since there exists an $m \in \widetilde{M}$ with $\sigma(m \sep \theta) > 0$), while $v_{\widetilde{\psi}}(\theta) = v_{\overline{\psi}} (\theta)$ for all $\theta \notin T$.
    Therefore, $V_{\widetilde{\psi}} - V_{\overline{\psi}} = \int\limits_T ( v_{\widetilde{\psi}}(\theta) - v_{\overline{\psi}}(\theta) ) \,\mathrm{d}\mu_0(\theta) > 0$ since $\mu_0(T) > 0$. To prove that $\widetilde{\psi}$ is obedient, we apply equilibrium condition (ii) to the equilibrium $(\sigma, \tau, q)$ and follow the steps in the proof of  \cref{thm:outcome_chara}\ref{thm1a}, replacing $\tau$ with $\widetilde{\tau}$ and noting that $M_j(\sigma,\widetilde{\tau}) \subseteq  M_j(\sigma,\tau)$.
\end{proof}

The nontrivial part of \cref{thm:det_cmt_outcome} involves proving that if $\overline{\psi}$ is both an equilibrium outcome and a commitment outcome, then it is deterministic almost everywhere. This is equivalent to showing that a nondeterministic equilibrium outcome cannot be a commitment outcome. Indeed, by \cref{lem:mixed_eq}, in the equilibrium that induces $\overline{\psi}$, R must play a mixed strategy following some on-path messages from a positive measure of states.
However, breaking those ties in favor of the S-preferred action strictly increases S's ex-ante utility, which implies that $\overline{\psi}$ is not a commitment outcome.

In many relevant settings, R chooses between two actions.
In this case, the analysis vastly simplifies.
From R's perspective, there are ``bad'' states $\theta \in A_1$, in which R prefers the low action 1, and ``good'' states $\theta \notin A_1$, in which she prefers the high action $2$.
The highest payoff that S can achieve is $v(2)$ (when R takes action $2$ with probability one), and the lowest is $v(1)$.
To state that an outcome $\psi \in \Psi$ is IC, it suffices to show that $\theta \notin A_1$ implies that $v_\psi(\theta) = v(2) \psi(2\sep\theta) + v(1) \psi(1\sep \theta) \geq v(2)$, which is equivalent to $\psi(2 \sep \theta) = 1$. The IC condition for $\theta \in A_1$ is not relevant because $v(1)$ is already the lowest payoff in the game.
In words, an outcome is IC if and only if R plays action 2 with probability one in all states in which action 2 is the unique best response under complete information.
The following result establishes the existence of an IC commitment outcome when R chooses between two actions.

\begin{proposition}\label{prop:2actionsIC}
    If $|J| = 2$, then there exists an IC commitment outcome.
\end{proposition}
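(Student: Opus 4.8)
The plan is to start from an arbitrary commitment outcome and ``repair'' it on the states where action $2$ is the unique best response, without destroying either obedience constraint or lowering S's ex-ante payoff. First I would note that a commitment outcome exists: the set of obedient outcomes is nonempty (the no-information outcome, in which R always plays her ex-ante optimal action, is obedient), and it is a compact convex subset of $\Psi$ on which the linear functional $V_\psi$ attains its maximum. Fix any such maximizer $\overline{\psi}$. Recall that, as noted in the text preceding the statement, with $|J|=2$ an outcome $\psi$ is IC precisely when $\psi(2 \sep \theta) = 1$ for every $\theta \notin A_1$, i.e.\ in every state where action $2$ is strictly preferred under complete information.

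The key construction is to define a new outcome $\psi^*$ by
\[
    \psi^*(2 \sep \theta) :=
    \begin{cases}
        1 & \text{if } \theta \notin A_1, \\
        \overline{\psi}(2 \sep \theta) & \text{if } \theta \in A_1,
    \end{cases}
    \qquad \psi^*(1 \sep \theta) := 1 - \psi^*(2 \sep \theta),
\]
which is measurable since $A_1$ is measurable and $\overline{\psi} \in \Psi$. By construction $\psi^*$ is IC. It remains to verify that $\psi^*$ is an obedient outcome with $V_{\psi^*} \geq V_{\overline{\psi}}$, which (by maximality of $\overline{\psi}$) forces $V_{\psi^*} = V_{\overline{\psi}}$ and hence makes $\psi^*$ an IC commitment outcome. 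The value comparison is immediate: on $\Theta \setminus A_1$ we have raised the probability of the higher action to one and left it unchanged on $A_1$, so $v_{\psi^*}(\theta) \geq v_{\overline{\psi}}(\theta)$ pointwise, whence $V_{\psi^*} \geq V_{\overline{\psi}}$.

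The main step, and the place that needs care, is checking that both obedience constraints survive the modification, because they move in opposite directions. On $\Theta \setminus A_1$ the coefficient $u(2,\theta) - u(1,\theta)$ is strictly positive; there I have increased $\psi(2\sep\theta)$ to $1$ and decreased $\psi(1\sep\theta)$ to $0$, while on $A_1$ nothing changes. For the obedience constraint for action $2$, the integrand $\big(u(2,\theta)-u(1,\theta)\big)\psi^*(2\sep\theta)$ weakly increases on $\Theta \setminus A_1$ and is unchanged on $A_1$, so the integral stays nonnegative. For the obedience constraint for action $1$, the integrand $\big(u(1,\theta)-u(2,\theta)\big)\psi^*(1\sep\theta)$ equals $0$ on $\Theta \setminus A_1$ (since $\psi^*(1\sep\theta)=0$ there) and this is weakly larger than the original value (a nonpositive coefficient times a nonnegative weight), while on $A_1$ it is unchanged; hence this integral also remains nonnegative. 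The anticipated obstacle is purely bookkeeping---correctly tracking that pushing mass toward action $2$ relaxes the action-$2$ obedience constraint while the reversed sign makes it relax, rather than tighten, the action-$1$ constraint---but the two-action structure makes both directions align favorably, so no trade-off arises. Combining the three facts yields that $\psi^*$ is an IC commitment outcome.
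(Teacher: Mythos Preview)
Your proposal is correct and follows essentially the same approach as the paper: start from a commitment outcome, force $\psi(2\sep\theta)=1$ on the ``good'' states, and verify that this preserves obedience and does not lower S's ex-ante payoff. The only cosmetic differences are that the paper modifies on $A_2$ rather than on $\Theta\setminus A_1$ (either set works) and cites external results for existence of a commitment outcome rather than invoking a compactness argument.
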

\begin{proof}
    Since $\Theta$ is a compact subset of $\mathbb{R}^n$, a commitment outcome exists by Proposition 3 in the online appendix of \cite{Kamenica2011} and Theorem 1 in \cite{Terstiege2023}. Let $\overline{\psi} \in \Psi$ be a commitment outcome and let $\widetilde{\psi} \in \Psi$ be an outcome such that $\widetilde{\psi}(\cdot \sep \theta) = \overline{\psi}(\cdot \sep \theta)$ for all $\theta \notin A_2$ and $\widetilde{\psi}(2\sep \theta) = 1$ for all $\theta \in A_2$. By construction, $\widetilde{\psi}$ is IC and weakly increases S's ex-ante utility over $\overline{\psi}$. Define $\delta(\theta) := u(2,\theta)-u(1,\theta)$ and observe that
    \begin{align*}
        \int\limits_\Theta \delta(\theta) \widetilde{\psi} (2\sep \theta) \,\mathrm{d}\mu_0(\theta)
        = \int\limits_\Theta \delta(\theta) \overline{\psi} (2\sep \theta) \,\mathrm{d}\mu_0(\theta)
        + \int\limits_{A_2} \delta(\theta) (1-\overline{\psi}(2\sep \theta)) \,\mathrm{d}\mu_0(\theta),
    \end{align*}
    where the last term is nonnegative because $\delta(\theta) \geq 0$ for all $\theta \in A_2$. Consequently, obedience of $\overline{\psi}$ (for both actions) implies obedience of $\widetilde{\psi}$. Hence, $\widetilde{\psi}$ is also a commitment outcome.
\end{proof}

The existing literature provides additional insights into commitment outcomes when $|J| = 2$ and $\Theta$ is finite.
\cite{Alonso2016} show that every commitment outcome is characterized by a cutoff state $\theta^*$, with all states satisfying $\delta(\theta) > \delta(\theta^*)$ pooled together to recommend action 2.
In particular, in all good states $\theta \notin A_1$, S recommends action 2, which implies that every commitment outcome is IC (see also Lemma B.2 in \citenop{KoesslerSkreta2023}).
Our \cref{prop:2actionsIC} also addresses the case in which $\Theta$ is rich. In this case, some commitment outcomes are not IC (although they are IC $\mu_0$-almost everywhere), and its proof outlines how to make an existing commitment outcome incentive-compatible.

Returning to the more general case in which $J \geq 2$, \cref{thm:det_cmt_outcome} is useful for verifying whether an existing commitment outcome $\overline{\psi}$ is an equilibrium outcome.
The answer is affirmative if and only if $\overline{\psi}$ is deterministic $\mu_0$-a.e. and IC.
Although verifying incentive compatibility may be straightforward, a deterministic commitment outcome is not guaranteed to exist.
In the remainder of this section, we consider the cases in which $\Theta$ is rich and $\Theta$ is finite separately.
We show that when $\Theta$ is rich, a deterministic commitment outcome always exists.
Furthermore, if $|J|=2$, the commitment payoff is always attained in equilibrium.
When $\Theta$ is finite, we derive an approximation result.

\subsection{Rich State Space}

When the state space $\Theta$ is rich (a convex and compact subset of $\mathbb{R}^n$) and the prior $\mu_0$ is atomless, the existence of a deterministic commitment outcome is guaranteed.

\begin{proposition} \label{prop:rich_cmt_outcome}
    If $\Theta$ is rich, then a deterministic commitment outcome exists. Furthermore, a deterministic commitment outcome is an equilibrium outcome if and only if it is IC.
\end{proposition}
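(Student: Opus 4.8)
The plan is to establish the two claims in sequence: first, existence of a deterministic commitment outcome when $\Theta$ is rich; second, the equivalence (deterministic commitment outcome is an equilibrium outcome $\iff$ it is IC), which should follow almost immediately from \cref{thm:det_cmt_outcome}.

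For the existence claim, I would start from an arbitrary commitment outcome $\overline{\psi}$, which exists by the same results cited in \cref{prop:2actionsIC} (Kamenica--Gentzkow's online appendix and Terstiege--Wasser). The goal is to convert $\overline{\psi}$ into a deterministic commitment outcome without lowering S's ex-ante payoff and without violating obedience. The natural tool here is a \emph{purification} argument exploiting the atomlessness of $\mu_0$: whenever $\overline{\psi}(\cdot \sep \theta)$ randomizes over several actions on a positive-measure set of states, one should be able to replace the randomization by a deterministic assignment of states to actions that preserves the aggregate ``mass'' each action receives, so that the obedience integrals $\int_\Theta (u(j,\theta)-u(j',\theta)) \psi(j \sep \theta)\,\mathrm{d}\mu_0$ are unchanged and S's objective $V_\psi$ is unchanged (since $v_\psi$ depends only on $\psi(\cdot\sep\theta)$ pointwise and we keep the distribution of recommended actions fixed in aggregate). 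The cleanest route is a Lyapunov-type convexity argument: consider the finite-dimensional vector of signed measures obtained by integrating the relevant payoff differences against $\psi(\cdot\sep\theta)$, and use Lyapunov's theorem on the range of an atomless vector measure to extract a deterministic (extreme-point) selection achieving the same vector of values. Since there are finitely many actions $K$ and finitely many obedience constraints, the relevant vector lives in $\mathbb{R}^{K(K-1)}$ (plus one coordinate for $V_\psi$), so Lyapunov applies.

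The main obstacle I anticipate is making the purification \emph{simultaneously} preserve obedience for every action and every comparison $j'$, while keeping $V_\psi$ exactly at the optimum. A naive state-by-state derandomization could tip some obedience constraint from slack-but-tight into violation. The Lyapunov/convexity formulation sidesteps this by treating all constraints as coordinates of a single vector measure and finding an extreme point of the set of achievable outcomes sharing the target value vector; atomlessness of $\mu_0$ guarantees this set's image is convex and that an extreme point corresponds to a $\{0,1\}$-valued (hence deterministic) $\psi$. One subtlety is that we want not merely \emph{an} outcome with the same value vector but one that is still \emph{optimal}, i.e. still a commitment outcome; this is automatic because we preserve both $V_\psi$ (so the value is unchanged) and all obedience coordinates (so feasibility is retained), whence the derandomized outcome is feasible with the optimal value and is therefore itself a commitment outcome.

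For the second claim, the equivalence is essentially a corollary. A deterministic commitment outcome $\overline{\psi}$ is, in particular, obedient. By \cref{thm:det_cmt_outcome}\ref{thm2:a}, if it is additionally IC then it is an equilibrium outcome. Conversely, by \cref{thm:det_cmt_outcome}\ref{thm2:b}, if it is an equilibrium outcome then it is IC. Since the outcome in hand is already deterministic, the ``$\mu_0$-almost everywhere deterministic'' conclusion of part \ref{thm2:b} is vacuously satisfied, so the only content extracted is IC. Hence a deterministic commitment outcome is an equilibrium outcome if and only if it is IC, completing the proof.
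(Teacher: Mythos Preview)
Your proposal is correct and follows essentially the same route as the paper: start from an existing commitment outcome, purify it using the atomlessness of $\mu_0$ (the paper invokes Theorem~2.1 of Dvoretzky--Wald--Wolfowitz (1951), which is precisely the Lyapunov-type purification you describe), and then read off the equivalence from \cref{thm:det_cmt_outcome}. The only cosmetic difference is that the paper encodes the preserved integrals via the $K{+}1$ measures $\mu_0$ and $\mathrm{d}\mu_j := u(j,\cdot)\,\mathrm{d}\mu_0$ rather than listing all $K(K{-}1)$ obedience comparisons separately, but the content is the same.
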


\begin{proof}
    The existence of a commitment outcome $\overline{\psi}$ follows from the same argument as that used in the proof of \cref{prop:2actionsIC}. Furthermore, $\overline{\psi}(j \sep \cdot): \Theta \to [0,1]$ is Borel measurable for every $j \in J$ and $\sum\limits_{j \in J} \overline{\psi}(j \sep \theta) = 1$ for all $\theta \in \Theta$. Let $\mu_j$ be such that $\mathrm{d}\mu_j := u(j, \cdot) \, \mathrm{d} \mu_0$ for each $j \in J$.

    Since $\mu_0$ is a finite and atomless positive measure and $u$ is bounded, $\mu_j$ is a finite and atomless signed measure for each $j \in J$.
    By Theorem 2.1 in \cite{DvoretzkyEtAl1951}, since $J$ is finite, there exist Borel measurable functions $\widetilde{\psi}(j \sep \cdot): \Theta \to \{0,1\}$ for all $j \in J$,
    with $\sum\limits_{j \in J} \widetilde{\psi}(j \sep \cdot) = 1$, such that (I) $\int\limits_{\Theta} \widetilde{\psi}(j \sep \theta) \, \mathrm{d} \mu_0 = \int\limits_{\Theta} \overline{\psi}(j \sep \theta) \, \mathrm{d} \mu_0$ and (II) $\int\limits_{\Theta} \widetilde{\psi}(j \sep \theta) \, \mathrm{d} \mu_j = \int\limits_{\Theta} \overline{\psi}(j \sep \theta) \, \mathrm{d} \mu_j$ for all $j \in J$.
    Condition (I) implies that
    \[
        V_{\widetilde{\psi}} = \int\limits_{\Theta} \sum_{j \in J} v(j) \widetilde{\psi}(j \sep \theta) \, \mathrm{d}\mu_0 = \int\limits_{\Theta} \sum_{j \in J} v(j) \overline{\psi}(j \sep \theta) \, \mathrm{d}\mu_0 = V_{\overline{\psi}}.
    \]
    Condition (II) implies that $\widetilde{\psi}$ is obedient, as $\overline{\psi}$ is.
    Hence, $\widetilde{\psi}$ is a deterministic commitment outcome. The second part follows from \cref{thm:det_cmt_outcome}.
\end{proof}

Verifying whether a deterministic commitment outcome with partition $\{ W_j \}_{j \in J}$ is IC (and therefore an equilibrium outcome) is straightforward. It requires determining whether $\theta \in W_j$ implies $v(j) \geq \underline{v}(\theta)$ for all $\theta \in \Theta$.
Consider the following example from \cite{Gentzkow2016}.

\begin{example}\label{example:GK16}
    Suppose R has three actions, $J = \{ 1,2,3 \}$, and the prior is uniform on $\Theta = [0,1]$. S's payoffs are given by $v(1) = 0$, $v(2) = 1$, and $v(3) = 3$. R's preferences depend only on the posterior mean. Given belief $\mu \in \Delta \Theta$, action $1$ is optimal if and only if $\mathbbm{E}_\mu[\theta] \le 1/3$; action 2 is optimal if and only if $\mathbbm{E}_\mu[\theta] \in [ 1/3, 2/3]$; and action 3 is optimal if and only if  $\mathbbm{E}_\mu[\theta] \geq 2/3$. Therefore, R's complete-information action sets are $A_1 = [ 0,1/3 ]$, $A_2 = [ 1/3, 2/3]$, and $A_3 = [2/3,1]$. \cite{Gentzkow2016} identify a deterministic commitment outcome $\overline{\psi}$ with an outcome partition $\overline{W}_1 = [0,8/48)$, $\overline{W}_2 = (11/48,21/48)$, and $\overline{W}_3 = [8/48,11/48] \cup [21/48,1]$.
    This outcome is IC, which we illustrate in \cref{fig:GKexample3}.
    Since $\overline{\psi}$ is a deterministic and IC commitment outcome, it is an equilibrium outcome by \cref{prop:rich_cmt_outcome}.

    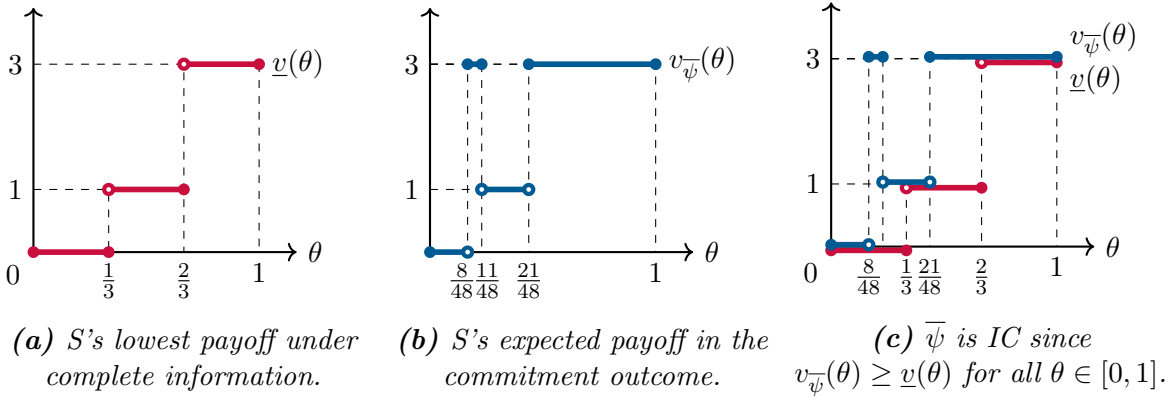
\begin{figure}[ht!]
        \centering
        \begin{subfigure}[b]{0.32\textwidth}
            \begin{tikzpicture}

\tikzset{
  font={\fontsize{11pt}{12}\selectfont}}

    \everymath\expandafter{\the\everymath\SetColor{black}}

    \draw[<->, thick] (3.5, 0) node[right]{\(\theta\)} -- (0,0) -- (0,3.25); 
    \draw[dashed] (0, 5/6) node[left]{\(1\)} -- (1, 5/6)-- (1, 0) node[below]{\(\frac{1}{3}\)};
    \draw[dashed] (0, 2.5) node[left]{\(3\)} -- (2, 2.5) -- (2, 0) node[below]{\(\frac{2}{3}\)};
    \draw[dashed] (3, 2.5) -- (3, 0) node[below]{\(1\)};
    \draw[line width = 0.75mm, MyRed] (0,0) node[below left, black]{\(0\)} -- (1,0);
    \draw[line width = 0.75mm, MyRed] (1,5/6) -- (2,5/6);
    \draw[ultra thick, MyRed, fill=white] (1,5/6) circle [radius=0.06];
    \draw[ultra thick, MyRed, fill=MyRed] (1,0) circle [radius=0.05];
    \draw[line width = 0.75mm, MyRed] (2,2.5) -- (3,2.5) node[right]{\(\underline{v}(\theta)\)};
    \draw[ultra thick, MyRed, fill=white] (2,2.5) circle [radius=0.06];
    \draw[ultra thick, MyRed, fill=MyRed] (2,5/6) circle [radius=0.05];
    \draw[ultra thick, MyRed, fill=MyRed] (3,2.5) circle [radius=0.05];
    \draw[ultra thick, MyRed, fill=MyRed] (0,0) circle [radius=0.05];
    
\end{tikzpicture}
            \caption{S's lowest payoff under complete information.}
        \end{subfigure}
        \hfill
        \begin{subfigure}[b]{0.32\textwidth}
            \begin{tikzpicture}

\tikzset{
  font={\fontsize{11pt}{12}\selectfont}}

    \everymath\expandafter{\the\everymath\SetColor{black}}

    \draw[<->, thick] (3.5, 0) node[right]{\(\theta\)} -- (0,0) -- (0,3.25); 

    \def\veight{0.5};
    \def\veleven{0.6875};
    \def\vtwentyone{1.3125};


    \draw[dashed] (0, 2.5) node[left]{\(3\)} -- (\veight,2.5) -- (\veight, 0) node[below,xshift=-0.75mm]{\(\frac{8}{48}\)};

    \draw[dashed] (\veleven,2.5) -- (\veleven,0) node[below,xshift=0.75mm]{\(\frac{11}{48}\)};

    \draw[dashed] (0, 5/6) node[left]{\(1\)} -- (\veleven,5/6);

    \draw[dashed] (0, 2.5)                   -- (\vtwentyone,2.5) -- (\vtwentyone, 0) node[below]{\(\frac{21}{48}\)};

    \draw[dashed] (0, 2.5)                   -- (3,2.5) -- (3, 0) node[below]{\(1\)};


    \draw[line width = 0.75mm, MyBlue] (0,0) -- (\veight,0);

    \draw[line width = 0.75mm, MyBlue] (\veleven,5/6) -- (\vtwentyone,5/6);

    \draw[line width = 0.75mm, MyBlue] (\veight,2.5) -- (\veleven,2.5);
    \draw[line width = 0.75mm, MyBlue] (\vtwentyone,2.5) -- (3,2.5) node[right] {$v_{\overline{\psi}} (\theta)$};
    

    \draw[ultra thick, MyBlue, fill=white] (\veight,0) circle [radius=0.06];

    \draw[ultra thick, MyBlue, fill=white] (\veleven,5/6) circle [radius=0.06];
    \draw[ultra thick, MyBlue, fill=white] (\vtwentyone,5/6) circle [radius=0.06];


    \draw[ultra thick, MyBlue, fill=MyBlue] (0,0) circle [radius=0.05];

    \draw[ultra thick, MyBlue, fill=MyBlue] (\veight,2.5) circle [radius=0.05];
    \draw[ultra thick, MyBlue, fill=MyBlue] (\veleven,2.5) circle [radius=0.05];

    \draw[ultra thick, MyBlue, fill=MyBlue] (\vtwentyone,2.5) circle [radius=0.05];
    \draw[ultra thick, MyBlue, fill=MyBlue] (3,2.5) circle [radius=0.05];
    
\end{tikzpicture}
            \caption{S's expected payoff in the commitment outcome.}
        \end{subfigure}
        \hfill
        \begin{subfigure}[b]{0.32\textwidth}
            \begin{tikzpicture}

\tikzset{
  font={\fontsize{11pt}{12}\selectfont}}

    \everymath\expandafter{\the\everymath\SetColor{black}}

    \draw[<->, thick] (3.5, 0) node[right]{\(\theta\)} -- (0,0) -- (0,3.25); 

    \def\veight{0.5};
    \def\veleven{0.6875};
    \def\vtwentyone{1.3125};

    \def\labelshift{-0.75};


    \draw[dashed] (0, 2.5) node[left]{\(3\)} -- (\veight,2.5) -- (\veight, 0) node[below,yshift=\labelshift]{\(\frac{8}{48}\)};

    \draw[dashed] (\veleven,2.5) -- (\veleven,0);

    \draw[dashed] (0, 5/6) node[left]{\(1\)} -- (\veleven,5/6);

    \draw[dashed] (0, 2.5)                   -- (\vtwentyone,2.5) -- (\vtwentyone, 0) node[below,yshift=\labelshift]{\(\frac{21}{48}\)};

    \draw[dashed] (0, 2.5)                   -- (3,2.5) -- (3, 0) node[below]{\(1\)};

    \draw[dashed] (1, 5/6)-- (1, 0) node[below,yshift=\labelshift]{\(\frac{1}{3}\)};
    \draw[dashed] (2, 2.5) -- (2, 0) node[below,yshift=\labelshift]{\(\frac{2}{3}\)};


    \def\yshift{-0.5mm};
    
    \draw[line width = 0.75mm, MyRed, yshift = \yshift] (0,0) node[below left, black]{\(0\)} -- (1,0);
    
    \draw[line width = 0.75mm, MyRed, yshift = \yshift] (1,5/6) -- (2,5/6);

    \draw[line width = 0.75mm, MyRed, yshift = \yshift] (2,2.5) -- (3,2.5) node[right,yshift = -2.25mm]{\(\underline{v}(\theta)\)};

    \draw[ultra thick, MyRed, fill=white, yshift = \yshift] (1,5/6) circle [radius=0.06];
    
    \draw[ultra thick, MyRed, fill=white, yshift = \yshift] (2,2.5) circle [radius=0.06];

    \draw[ultra thick, MyRed, fill=MyRed, yshift = \yshift] (2,5/6) circle [radius=0.05];
    
    \draw[ultra thick, MyRed, fill=MyRed, yshift = \yshift] (3,2.5) circle [radius=0.05];
    
    \draw[ultra thick, MyRed, fill=MyRed, yshift = \yshift] (0,0) circle [radius=0.05];

    \draw[ultra thick, MyRed, fill=MyRed, yshift = \yshift] (1,0) circle [radius=0.05];


    \def\yshift{0.25mm};


    \draw[line width = 0.75mm, MyBlue, yshift = \yshift] (0,0) -- (\veight,0);

    \draw[line width = 0.75mm, MyBlue, yshift = \yshift] (\veleven,5/6) -- (\vtwentyone,5/6);

    \draw[line width = 0.75mm, MyBlue, yshift = \yshift] (\veight,2.5) -- (\veleven,2.5);
    \draw[line width = 0.75mm, MyBlue, yshift = \yshift] (\vtwentyone,2.5) -- (3,2.5) node[right,yshift = 2.25mm] {$v_{\overline{\psi}} (\theta)$};
    

    \draw[ultra thick, MyBlue, fill=white, yshift = \yshift] (\veight,0) circle [radius=0.06];

    \draw[ultra thick, MyBlue, fill=white, yshift = \yshift] (\veleven,5/6) circle [radius=0.06];
    \draw[ultra thick, MyBlue, fill=white, yshift = \yshift] (\vtwentyone,5/6) circle [radius=0.06];


    \draw[ultra thick, MyBlue, fill=MyBlue, yshift = \yshift] (0,0) circle [radius=0.05];

    \draw[ultra thick, MyBlue, fill=MyBlue, yshift = \yshift] (\veight,2.5) circle [radius=0.05];
    \draw[ultra thick, MyBlue, fill=MyBlue, yshift = \yshift] (\veleven,2.5) circle [radius=0.05];

    \draw[ultra thick, MyBlue, fill=MyBlue, yshift = \yshift] (\vtwentyone,2.5) circle [radius=0.05];
    \draw[ultra thick, MyBlue, fill=MyBlue, yshift = \yshift] (3,2.5) circle [radius=0.05];

\end{tikzpicture}
            \caption{$\overline{\psi}$ is IC since $v_{ \overline{\psi} }(\theta) \geq \underline{v}(\theta)$ for all $\theta \in [0,1]$.}
        \end{subfigure}
        \caption{Commitment outcome $\overline{\psi}$ is IC since S receives at least his complete-information payoff in every state of the world.}
        \label{fig:GKexample3}
    \end{figure}

\end{example}

When R chooses between two actions, S always attains his commitment payoff in equilibrium.

\begin{proposition} \label{prop:binary_rich}
    If $\Theta$ is rich and $|J| = 2$, then there exists a commitment outcome that is an equilibrium outcome.
\end{proposition}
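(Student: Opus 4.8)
The plan is to produce a single commitment outcome that is simultaneously deterministic and IC, and then invoke the second part of \cref{prop:rich_cmt_outcome} (equivalently, Part \ref{thm2:a} of \cref{thm:det_cmt_outcome}). The two tools already in hand pull in opposite directions: \cref{prop:2actionsIC} supplies an IC commitment outcome that need not be deterministic, whereas the purification argument behind \cref{prop:rich_cmt_outcome} supplies a deterministic commitment outcome that need not be IC. The whole content of the proof is to run the purification in a way that preserves incentive compatibility.

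First I would fix an IC commitment outcome $\overline{\psi}$, which exists by \cref{prop:2actionsIC}. Because $|J| = 2$, incentive compatibility says precisely that $\overline{\psi}(2 \sep \theta) = 1$ for every $\theta \notin A_1$. Hence $\overline{\psi}(\cdot \sep \theta)$ is already degenerate off $A_1$, and any randomization it performs is confined to the set $A_1$. This structural observation is the key: the only mixing that purification must eliminate lives inside $A_1$, so I can localize the purification there and leave the rest of the outcome untouched.

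Next I would purify $\overline{\psi}$ only on $A_1$. Let $\delta(\theta) := u(2,\theta) - u(1,\theta)$, and, as in the proof of \cref{prop:rich_cmt_outcome}, let $\mu_j$ be given by $\mathrm{d}\mu_j := u(j,\cdot)\,\mathrm{d}\mu_0$ for $j \in \{1,2\}$; these are finite and atomless signed measures, as are their restrictions to the Borel set $A_1 = \{\theta \sep \delta(\theta) \le 0\}$. Applying Theorem 2.1 in \cite{DvoretzkyEtAl1951} to $\overline{\psi}(\cdot \sep \cdot)$ on $A_1$, matching each component against all three measures $\mu_0, \mu_1, \mu_2$ restricted to $A_1$, yields measurable functions $\widetilde{\psi}(j \sep \cdot): A_1 \to \{0,1\}$ with $\widetilde{\psi}(1 \sep \cdot) + \widetilde{\psi}(2 \sep \cdot) = 1$ and $\int_{A_1} \widetilde{\psi}(j \sep \theta)\,\mathrm{d}\mu_i = \int_{A_1} \overline{\psi}(j \sep \theta)\,\mathrm{d}\mu_i$ for all $i \in \{0,1,2\}$ and $j \in \{1,2\}$. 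Extend $\widetilde{\psi}$ to $\Theta$ by setting $\widetilde{\psi}(\cdot \sep \theta) = \overline{\psi}(\cdot \sep \theta)$ for $\theta \notin A_1$. Then $\widetilde{\psi}$ is deterministic (pure on $A_1$, and equal off $A_1$ to $\overline{\psi}$, which plays action $2$ with probability one there); its action-$1$ set satisfies $W_1 \subseteq A_1$, so $\widetilde{\psi}(2 \sep \theta) = 1$ for all $\theta \notin A_1$ and $\widetilde{\psi}$ is IC; and since $\widetilde{\psi}$ agrees with $\overline{\psi}$ off $A_1$ and matches its integrals against $\mu_0, \mu_1, \mu_2$ on $A_1$, it shares the ex-ante utility $V_{\widetilde{\psi}} = V_{\overline{\psi}}$ (via $\mu_0$) and the obedience integrals $\int_\Theta \delta(\theta)\,\widetilde{\psi}(j \sep \theta)\,\mathrm{d}\mu_0 = \int_\Theta \delta(\theta)\,\overline{\psi}(j \sep \theta)\,\mathrm{d}\mu_0$ (via $\delta\,\mathrm{d}\mu_0 = \mathrm{d}\mu_2 - \mathrm{d}\mu_1$), so it inherits obedience and optimality and is a commitment outcome. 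Being deterministic and IC, $\widetilde{\psi}$ is an equilibrium outcome by \cref{prop:rich_cmt_outcome}.

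The step I expect to be the main obstacle is exactly the preservation of incentive compatibility. The plain purification of \cref{prop:rich_cmt_outcome}, applied over all of $\Theta$, matches only integrals and could reassign action $1$ to states outside $A_1$, breaking the pointwise IC requirement that action $2$ be taken with certainty wherever it is the unique best response. Confining the randomization---and hence the purification---to $A_1$, which is legitimate precisely because IC forces $\overline{\psi}$ to be degenerate off $A_1$, is what keeps $W_1 \subseteq A_1$ and makes the resulting deterministic commitment outcome incentive-compatible.
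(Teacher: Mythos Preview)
Your proof is correct, but it reverses the order of the two operations relative to the paper. The paper first invokes \cref{prop:rich_cmt_outcome} to obtain a \emph{deterministic} commitment outcome $\overline{\psi}$, and then applies the transformation from the proof of \cref{prop:2actionsIC} (set $\widetilde{\psi}(2\sep\theta)=1$ on $A_2$, leave it unchanged elsewhere) to make it IC; since that transformation only replaces degenerate distributions by other degenerate distributions, determinism is preserved for free, and \cref{prop:rich_cmt_outcome} closes the argument.

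You instead start from an IC commitment outcome and then purify, which forces you to worry that purification could assign action~$1$ outside $A_1$ and destroy IC. You handle this correctly by observing that IC already pins down the outcome off $A_1$ and therefore localizing the Dvoretzky--Wald--Wolfowitz step to $A_1$. This works, and matching against all three measures $\mu_0,\mu_1,\mu_2$ on $A_1$ is exactly what is needed to carry both the value and the obedience integrals. The paper's ordering is the shorter path, since ``IC-ify then purify'' creates an obstacle (preserving a pointwise constraint through an integral-matching procedure) that ``purify then IC-ify'' simply never encounters; but your localized purification is a legitimate and instructive way around it.
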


\begin{proof}
    By \cref{prop:rich_cmt_outcome}, there exists a deterministic commitment outcome $\overline{\psi}$. Using the same argument as that used in the proof of \cref{prop:2actionsIC}, we construct a deterministic commitment outcome $\widetilde{\psi}$ that is IC. By \cref{prop:rich_cmt_outcome}, $\widetilde{\psi}$ is an equilibrium outcome.
\end{proof}

\subsection{Finite State Space}

When the state space is finite, i.e., $\Theta = \{1, \ldots, N\}$, a deterministic commitment outcome may not exist.
For instance, in \cref{example:KG11}, the \emph{unique} commitment outcome is not deterministic.
As a result, S may not be able to achieve the commitment payoff in equilibrium.

However, here we show that when the state space is sufficiently rich (in the sense that $\mu_0(\theta)$ is sufficiently small for each $\theta \in \Theta$), then S's equilibrium payoff approaches his commitment payoff.
For a concise argument, we adopt the assumptions of \cite{Alonso2016}: R has a binary action and
\begin{equation} \label{eqn:alonso}
    \theta' \ne \theta'' \Longrightarrow \delta(\theta') \ne \delta(\theta''), \tag{RU}
\end{equation}
where $\delta(\theta)=u(2,\theta) - u(1, \theta)$ for all $\theta \in \Theta$.

\begin{proposition} \label{prop:binary_finite_approx}
    Suppose that $\Theta$ is finite, $|J| = 2$, \eqref{eqn:alonso} holds,
    and S's payoffs are normalized to $v(2) = 1$ and $v(1) = 0$.\footnote{Normalizing S's payoffs is without loss of generality; condition \eqref{eqn:alonso} simplifies the proof, but the result remains true without it.}
    Let $V^*$ be S's commitment payoff. For every $\varepsilon>0$, there is $\gamma>0$ such that if $\mu_{0}(\theta)<\gamma$ for all $\theta \in \Theta$, then there exists an equilibrium outcome $\alpha$ with $| V^* - V_\alpha | < \varepsilon$.
\end{proposition}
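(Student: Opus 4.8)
The plan is to exploit the sharp structure that binary actions and \eqref{eqn:alonso} impose on commitment outcomes, and then deterministically round the unique fractional state. By \cite{Alonso2016}, every commitment outcome is a cutoff rule: there is a threshold such that every state $\theta$ with $\delta(\theta)$ strictly above it is pooled to recommend action $2$ with probability one, every state strictly below recommends action $1$ with probability one, and at most one ``cutoff'' state $\theta^*$ receives a fractional recommendation $\overline{\psi}(2\sep\theta^*)=p\in(0,1)$; under \eqref{eqn:alonso} the values $\delta(\theta)$ are distinct, so this cutoff state is unique. I would first dispose of the easy cases. If $\mathbbm{E}_{\mu_0}[\delta]\ge 0$, then recommending action $2$ in every state is obedient, IC, and deterministic, and yields $V^{*}=v(2)=1$, so there is nothing to prove. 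If the cutoff rule is already deterministic (e.g.\ $p\in\{0,1\}$, or all states are bad), then it is IC by the reasoning in the proof of \cref{prop:2actionsIC} (action $2$ is recommended in every good state) and obedient, hence an equilibrium outcome by \cref{thm:outcome_chara}\ref{thm1b}, with zero loss. So I may assume $\overline{\psi}$ has a genuine fractional cutoff state $\theta^{*}$.

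The key observation is that $\theta^{*}$ must be a \emph{bad} state, $\delta(\theta^{*})\le 0$: since all good states ($\delta>0$) are pooled into the action-$2$ recommendation with probability one, any state recommended action $2$ only fractionally cannot be good. I then define the deterministic outcome $\alpha$ to agree with $\overline{\psi}$ everywhere except at $\theta^{*}$, where I round \emph{down}: $\alpha(2\sep\theta^{*}):=0$ and $\alpha(1\sep\theta^{*}):=1$. I claim $\alpha$ is IC and obedient. IC holds because the rounding touches only $\theta^{*}$, which is not a state where action $2$ is the unique best response, so action $2$ is still played with probability one in every good state. Obedience for action $1$ holds because, after the rounding, every state recommended action $1$ has $\delta(\theta)\le 0$, so $\int_{\Theta}\delta(\theta)\alpha(1\sep\theta)\,\mathrm{d}\mu_0\le 0$, which is exactly \eqref{obedience} for $j=1$. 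Obedience for action $2$ is preserved because discarding $\theta^{*}$ from the action-$2$ pool changes the left-hand side of \eqref{obedience} for $j=2$ by $-p\,\delta(\theta^{*})\mu_0(\theta^{*})\ge 0$, so the constraint, already satisfied by $\overline{\psi}$, still holds. By \cref{thm:outcome_chara}\ref{thm1b}, the deterministic, IC, obedient outcome $\alpha$ is an equilibrium outcome.

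Finally I would bound the loss. Since $\alpha$ and $\overline{\psi}$ differ only at $\theta^{*}$, and there only in the probability of the payoff-$1$ action, $V^{*}-V_\alpha = v(2)\,p\,\mu_0(\theta^{*}) = p\,\mu_0(\theta^{*}) \le \mu_0(\theta^{*})$, so choosing $\gamma:=\varepsilon$ gives $|V^{*}-V_\alpha|\le\mu_0(\theta^{*})<\gamma=\varepsilon$ whenever $\mu_0(\theta)<\gamma$ for all $\theta$. The substance of the argument is front-loaded into the cutoff characterization of \cite{Alonso2016} and the verification that it is rounding \emph{down} (not up) that preserves both obedience constraints: rounding $\theta^{*}$ up into the action-$2$ pool would instead add a negative $\delta(\theta^{*})$ contribution and could violate \eqref{obedience} for $j=2$. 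I expect this obedience bookkeeping to be the only genuine obstacle; the hypothesis $\mu_0(\theta)<\gamma$ plays no role in feasibility and serves only to make the discarded state's prior mass — and hence the unavoidable rounding loss — small, which is why approximation, rather than exact implementation, is the most one can guarantee when all commitment outcomes are nondeterministic (cf.\ \cref{example:KG11}).
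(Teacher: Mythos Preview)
Your proposal is correct and follows essentially the same route as the paper's own proof: invoke the cutoff characterization of \cite{Alonso2016}, observe that the cutoff state $\theta^{*}$ is bad, round it down into the action-$1$ pool to obtain a deterministic IC and obedient outcome, and bound the loss by $\mu_{0}(\theta^{*})<\gamma:=\varepsilon$. Your treatment is in fact more explicit than the paper's, which merely asserts ``it is easy to see that $\alpha$ is IC and obedient''; your verification of the two obedience constraints and your remark on why rounding \emph{up} could fail are welcome elaborations, and your easy-case split ($\mathbb{E}_{\mu_0}[\delta]\ge 0$ versus the paper's $A_2=\Theta$) is a harmless variant.
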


\begin{proof}
    If $A_2 = \Theta$, then let $\alpha(2 \sep \theta) = 1$ for all $\theta \in \Theta$ so that $V_\alpha = V^*$.
    Thus, we assume for the remainder of the proof that $A_2$ is a proper subset of $\Theta$.
    Since \eqref{eqn:alonso} holds, we can use Proposition 2 in \cite{Alonso2016} to find a cutoff state $\theta^* \in \Theta$ such that $\delta(\theta^*) < 0$ and, for every commitment outcome $\psi$, we have $\psi(2\sep \theta) = 1$ ($\psi(1\sep \theta) = 1$) for all $\theta \in \Theta$ such that $\delta(\theta) > \delta(\theta^*)$ ($\delta(\theta) < \delta(\theta^*)$).
    Now consider a deterministic outcome $\alpha$ with partition $\{ W_1,W_2 \}$ such that $W_2 = \{ \theta \in \Theta \sep \psi (2 \sep \theta) = 1\}$ and $W_1 = \Theta \smallsetminus W_2$.
    It is easy to see that $\alpha$ is IC and obedient, and therefore it is an equilibrium outcome by \cref{thm:outcome_chara}.
    If $\psi(2 \sep \theta^*) = 1$, the difference in S's ex-ante payoffs is zero; otherwise, we have $V^* - V_\alpha =  \psi(2 \sep \theta^*) \mu_0(\theta^*)  < \mu_0(\theta^*) < \gamma := \varepsilon$.
\end{proof}

Thus, when R chooses between two actions, S can attain a payoff arbitrarily close to his commitment payoff in equilibrium as long as the prior probability of each state is sufficiently small.

\section{A Model with a Stochastic Message Mapping}\label{section:SMM}

In the main model, IC and obedience are not sufficient for an outcome to be an equilibrium outcome; there exist nondeterministic but IC and obedient outcomes in which S effectively recommends multiple actions, leading to different expected payoffs in the same state. This violates equilibrium condition \ref{eqm_cond_i}. The reason why \ref{eqm_cond_i} is violated is that the mapping $E: \Theta \rightrightarrows M$, a correspondence that determines the set of messages available in state $\theta$, is deterministic.
This assumption is standard in the literature on verifiable disclosure and cheap talk.\footnote{
    For example, in \cite{Grossman1981}, \cite{Milgrom1981}, and \cite{Milgrom1986}, $E(\theta)$ includes subsets of $\Theta$ that contain $\theta$. In \cite{Dye1985}, $E(\theta)$ is binary; S can reveal $\theta$ or say nothing. In \cite{HartEtAl2017} and \cite{Ben-Porath2019}, $E(\theta)$ is a partial order on $\Theta$. In cheap talk, $E(\theta)$ is the same for all $\theta \in \Theta$.
}
In some cases, however, it is reasonable to assume that the mapping $E(\theta)$ is stochastic: for example, there may be different labels for the same state, and S can make statements about the label rather than the state.
In this section, we introduce a verifiable disclosure game with a stochastic message mapping (henceforth, the SMM game) and show that IC and obedience are sufficient for an outcome to be an equilibrium outcome of this game.

The SMM game has the same timeline and player objectives as our main model, with the only modification occurring in Stage 2, in which S communicates with R.
Specifically, we assume that along with the state of the world $\theta \in \Theta$, where the state space $\Theta = \{1, \ldots, N\}$ is finite, S also observes a \emph{label} $x \in [0,1]$, which is payoff-irrelevant to both S and R.
The label $x$ is drawn from the uniform distribution on $X^\theta$,
where $\{ X^\theta \}_{\theta \in \Theta}$ forms a partition of the unit interval such that $\lambda(X^\theta) = \mu_0(\theta)$, where $\lambda$ is the Lebesgue measure.\footnote{
    For example, let $t_0 := 0$, $t_\theta := \sum\limits_{\theta'=1}^\theta \mu_0(\theta')$ for all $  \theta \in \Theta$; also, let $X^\theta = [t_{\theta-1},t_{\theta})$ for all $\theta \in \{ 1,\ldots,N-1 \}$ and $X^N = [t_{N-1},1]$. Then $\{ X^{\theta} \}_{\theta \in \Theta}$ is a partition of $[0,1]$ and $\lambda(X^\theta) = \sum\limits_{\theta'=1}^\theta \mu_0(\theta') - \sum\limits_{\theta'=1}^{\theta-1} \mu_0(\theta') = \mu_0(\theta)$.
} 
Having observed $\theta$ and $x$, S sends message $m \in \widehat{M}$ such that $x \in m$, where $\widehat{M}$ is the collection of nonempty Borel subsets of $[0,1]$. Thus, the set of messages available to S in state $\theta$ is now determined stochastically (through $x$). The {\it equilibrium of the SMM game} $(\widehat{\sigma},\widehat{\tau},\widehat{q})$ is defined analogously to that of the main model, except S's strategy also depends on $x$.

\begin{definition}\label{dfn:SMM-equilibrium}
    A triple $(\widehat{\sigma},\widehat{\tau},\widehat{q})$, where $\widehat{\sigma}: \Theta \times [0,1] \to \Delta_0  \widehat{M}$ is S's strategy, $\widehat{\tau}:  \widehat{M} \to \Delta J$ is R's strategy, and $\widehat{q}:  \widehat{M} \to \Delta \Theta$ is R's belief system, is an \emph{equilibrium of the SMM game} if
    \begin{enumerate}[label=(\roman*)]
        \item for all $\theta \in \Theta$ and $x \in [0,1]$, $\widehat{\sigma} (\cdot\sep \theta,x)$ is supported on $\argmax\limits_{\{m \in  \widehat{M} \sep x \in m\}} \sum\limits_{j \in J} v(j) \, \widehat{\tau}(j \sep m)$;\label{eqm_cond_i-SMM}
        \item for all $m \in  \widehat{M}$, $\widehat{\tau} (\cdot \sep m)$ is supported on $\argmax\limits_{j \in J} \label{eqm_cond_ii-SMM}\int\limits_{\Theta} u(j,\theta) \, \mathrm{d}q(\theta \sep m)$;
        \item $\widehat{q}$ is obtained from $\mu_0$, given $\widehat{\sigma}$, using Bayes' rule; \label{eqm_cond_iii-SMM}
        \item for all $m \in  \widehat{M}$, $\widehat{q} (\cdot \sep m) \in \Delta  \{ \theta \in \Theta \sep X^\theta \cap m \neq \varnothing \}$. \label{eqm_cond_iv-SMM}
    \end{enumerate}
\end{definition}

Since $x$ is payoff-irrelevant, an outcome $\alpha$ of the SMM game is an element of $\Psi$. An outcome $\alpha$ is an \emph{equilibrium outcome} of the SMM game if an equilibrium $(\widehat{\sigma},\widehat{\tau},\widehat{q})$ exists that induces it, i.e., $\alpha(j \sep \theta) = \int\limits_{X^\theta} \sum\limits_{m \in \supp \widehat{\sigma}(\cdot \sep \theta, x)} \widehat{\sigma}(m \sep \theta,x) \widehat{\tau}(j \sep m) \,\mathrm{d}x \, \big/ \mu_0(\theta)$.

We derive a sharp characterization of equilibrium outcomes in the SMM game.

\begin{theorem}\label{thm:EQN-char-SMM}
    Let $\Theta$ be finite. Then $\alpha \in \Psi$ is an equilibrium outcome of the SMM game $\iff$ $\alpha$ is IC and obedient.
\end{theorem}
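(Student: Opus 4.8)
The plan is to prove the two implications separately. The forward direction reuses the logic of \cref{thm:outcome_chara}\ref{thm1a}, while the converse is the substantive part and is where the stochasticity of the message mapping earns its keep. For necessity, let $(\widehat{\sigma},\widehat{\tau},\widehat{q})$ be an equilibrium of the SMM game inducing $\alpha$. To obtain \eqref{IC-theta}, I would observe that in state $\theta$ at any label $x\in X^\theta$, S can deviate to the message $\{x\}$: by \ref{eqm_cond_iv-SMM} R's belief after $\{x\}$ is supported on $\{\theta'\in\Theta\sep X^{\theta'}\cap\{x\}\neq\varnothing\}=\{\theta\}$, so R best-responds as under complete information and S secures at least $\underline{v}(\theta)$. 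Optimality of S's prescribed message at $(\theta,x)$ then pins his interim payoff at almost every label of $X^\theta$ to be at least $\underline{v}(\theta)$, and averaging over $X^\theta$ gives $v_\alpha(\theta)\geq\underline{v}(\theta)$. Obedience is identical to the argument in \cref{thm:outcome_chara}\ref{thm1a}: condition \ref{eqm_cond_ii-SMM} makes action $j$ optimal after every message that induces it, and bundling those messages via Bayes' rule---now with an extra integral over the labels---recovers \eqref{obedience}.

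For sufficiency I would construct an equilibrium implementing $\alpha$ by converting actions into label-indexed recommendations. For each state $\theta$, partition its label cell $X^\theta$ into Borel pieces $\{X^\theta_j\}_{j\in J}$ with $\lambda(X^\theta_j)=\alpha(j\sep\theta)\mu_0(\theta)$, and define the recommendation message $m_j:=\bigcup_{\theta\in\Theta}X^\theta_j$. S's strategy sends $m_j$ exactly when the realized label lies in $X^\theta_j$, and R replies with $\widehat{\tau}(j\sep m_j)=1$. Bayes' rule makes R's posterior after $m_j$ proportional to $\alpha(j\sep\cdot)\mu_0(\cdot)$, so \eqref{obedience} says precisely that $j$ is a best response to $m_j$, and the induced outcome is $\alpha$ by construction. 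The decisive feature is that the labels, not S's indifference, discipline S: because the cells $\{X^\theta_j\}_{j}$ are disjoint, each label lies in exactly one on-path message, so S has no access to a recommendation of a different, higher-paying action. This is exactly the margin on which sufficiency broke down in the baseline model (\cref{example:KG11}), and it is why IC and obedience become jointly sufficient once the message mapping is stochastic.

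Off-path messages I would handle as in \cref{thm:outcome_chara}\ref{thm1b}: assign skeptical beliefs---supported, consistently with \ref{eqm_cond_iv-SMM}, on the state in the message that calls for the lowest complete-information action---so that any unprescribed message yields S at most $\underline{v}(\theta)$. The main obstacle, and the only step that differs materially from \cref{thm:outcome_chara}, is the verification of S's incentive constraint \ref{eqm_cond_i-SMM} at every label. Two deviations must be excluded: a switch to another recommendation message, which the disjointness of the label cells forecloses, and a deviation to an off-path (in particular fully revealing) message, which \eqref{IC-theta} makes unprofitable. The subtle case is a state $\theta$ for which $\alpha$ assigns positive probability to an action $j$ with $v(j)<\underline{v}(\theta)$: a pure recommendation of $j$ at such labels would invite S to reveal instead, so the construction must route this mass through a message at which R randomizes and which pools $\theta$ with states that keep R indifferent, so that S's interim payoff stays weakly above $\underline{v}(\theta)$ at every label while obedience is preserved. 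Establishing that IC and obedience together always permit such a design is the crux of the argument.
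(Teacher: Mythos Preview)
Your necessity argument and your label-partition construction for sufficiency match the paper's proof exactly: partition each $X^\theta$ into pieces $X^\theta_j$ of Lebesgue mass $\alpha(j\sep\theta)\mu_0(\theta)$, set $W_j=\bigcup_\theta X^\theta_j$, have S send the unique $W_j$ containing his label, let R play $j$ after $W_j$ (justified by \eqref{obedience}), and assign skeptical off-path beliefs. The paper then dispatches S's incentives in one line: off-path deviations yield at most $\underline{v}(\theta)$, ``which is below $v_\alpha(\theta)$ by the \eqref{IC-theta} constraint.''

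You go further and flag what the paper glosses over: at a label $x\in X^\theta_j$ the on-path payoff is $v(j)$, not $v_\alpha(\theta)$, so the inequality needed at that label is $v(j)\ge\underline{v}(\theta)$ for every $j\in\supp\alpha(\cdot\sep\theta)$---and IC, being an average condition, does not deliver this. Your concern is well founded. Take $J=\{1,2,3\}$ with $v=(0,1,2)$, two equiprobable states $a,b$ with $u(\cdot,a)=(0,2,0)$ and $u(\cdot,b)=(2,0,2)$, and $\alpha$ putting mass $\tfrac12$ each on actions $1$ and $3$ in both states. This $\alpha$ is IC and obedient, yet under the paper's construction a sender at any label in $X^a_1$ earns $v(1)=0$ and strictly prefers to reveal $x$, whereupon R learns $\theta=a$, plays her unique best response $2$, and S collects $1$. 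So you have identified a genuine gap in the paper's argument, not an excess of caution. What you have \emph{not} done is close it: you gesture at pooling such labels into messages at which R randomizes and call this ``the crux,'' but you give no proof that IC plus obedience always permit such a repair. At this step both your proposal and the paper's proof are incomplete.
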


\begin{proof}
    $(\Longrightarrow)$ is proved exactly the same way as \cref{thm:outcome_chara} \ref{thm1a}. An equilibrium outcome must be IC or else S has a profitable deviation to fully revealing $x$ (which also reveals $\theta \in \Theta$ since $x \in X^\theta$).
    An equilibrium outcome must be obedient by Bayes' rule.

    $(\Longleftarrow)$ Consider an IC and obedient outcome $\alpha$. For every $\theta \in \Theta$, let $J^\theta := \supp \alpha(\cdot \sep \theta)$ be the set of actions that R takes with a positive probability when the realized state is $\theta$. Next, partition $X^\theta$ into a set of intervals $\{ X_j^\theta \}_{j \in J^\theta}$ such that $\lambda(X_j^\theta) \, / \, \lambda(X^\theta) = \alpha(j \sep \theta)$. Also, for each action $j \in J$, let $W_j := \bigcup\limits_{\theta \in \Theta} X_j^\theta$; by construction, $\{ W_j \}_{j\in J}$ is a partition of $[0,1]$.

    Now let S's strategy be $\widehat{\sigma}( m \sep \theta,x ) = \mathbbm{1}( m = W_j \text{ and } x \in W_j )$. Then R's posterior after an on-path message $W_j$ is $\widehat{q}(\theta \sep W_j) = \lambda(X_j^\theta) \, / \,\lambda(W_j)$. Furthermore, since $\alpha$ is obedient, for every action $j \in J$ such that $\lambda(W_j) > 0$, we have
    \begin{align*}
        \sum\limits_{\theta \in \Theta} \big( u(j,\theta) - u(j',\theta) \big) \alpha(j \sep \theta) \mu_0 (\theta)       & \geq 0
        \iff                                                                                                                                                                            \\
        \sum\limits_{\theta \in \Theta} \big( u(j,\theta) - u(j',\theta) \big) \frac{ \lambda(X_j^\theta) }{\lambda(W_j)} & \geq 0 \quad \text{for all } j' \in J,
    \end{align*}
    meaning that R prefers to take action $j$ after message $W_j$, so we let $\widehat{\tau}(j \sep W_j) = 1$. Off the path, let R be ``skeptical'' and assume that any unexpected message comes from the state in which S benefits from such deviation the most. Formally, for all $m \notin \{W_j\}_{j \in J}$, let $\widehat{q}(\cdot \sep m) \in \Delta A_{\underline{j}}$, where $\underline{j} \in J$ is the lowest action such that  $m \cap X^\theta \neq \varnothing$ and $\theta \in A_i$. Then playing action $\underline{j}$ is a best response to message $m$, so we let $\widehat{\tau}(\underline{j} \sep m) = 1$. Since $\alpha$ is IC, S does not have profitable deviations by the same argument as in the proof of \cref{thm:outcome_chara}. Deviations to on-path messages are not available because $\{ W_j \}_{j \in J}$ is a partition, and deviations to off-path messages are not profitable since the payoff from any deviation in state $\theta$ is at most $\underline{v}(\theta)$, which is below $v_\alpha(\theta)$ by the \eqref{IC-theta} constraint. Hence, $(\widehat{\sigma},\widehat{\tau},\widehat{q})$ is an equilibrium of the SMM game.
\end{proof}

In contrast to \cref{thm:outcome_chara}, IC and obedience are necessary and sufficient for an outcome to be an equilibrium outcome of the SMM game.
Two properties of the SMM game ensure that every IC and obedient outcome is an equilibrium outcome.
First, S's message space depends on $x$, which means S may receive different equilibrium payoffs in some state $\theta$ (but for different realizations of $x$).
Second, the message space is ``rich,'' meaning that for every vector $p =(p_1,\ldots,p_N) \in [0,1]^N$, there exists a message $m$ that is available in state $\theta \in \Theta$ with probability $p_\theta$.
This richness allows us to ``purify'' any nondeterministic outcome: the equilibria that we construct to implement an IC and obedient outcome is in pure strategies of both S and R.

Using the sharp equilibrium characterization of the SMM game, we derive the following results.

\begin{corollary}\label{cor:SMM-comm-outcome}
    Let $\Theta$ be finite. Then a commitment outcome is an equilibrium outcome of the SMM game if and only if it is IC.
\end{corollary}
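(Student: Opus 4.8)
The plan is to derive the corollary directly from the sharp characterization in \cref{thm:EQN-char-SMM}, using the observation that obedience is automatic for any commitment outcome. Recall that by definition a commitment outcome $\overline{\psi}$ solves problem \eqref{eqn:SPcommitment_outcome}, whose feasibility constraints are exactly the obedience constraints \eqref{obedience} for every action $j \in J$. Hence every commitment outcome is obedient, and the only remaining content of the equivalence in \cref{thm:EQN-char-SMM} is the IC condition.

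Concretely, I would argue both directions as follows. For the ``if'' direction, suppose $\overline{\psi}$ is a commitment outcome that is IC. Since it is also obedient (by the definition above), \cref{thm:EQN-char-SMM} applied to $\alpha = \overline{\psi}$ immediately yields that $\overline{\psi}$ is an equilibrium outcome of the SMM game. For the ``only if'' direction, suppose $\overline{\psi}$ is a commitment outcome that is an equilibrium outcome of the SMM game. Then the forward implication of \cref{thm:EQN-char-SMM} gives that $\overline{\psi}$ is IC (and obedient), which is what we need.

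I do not expect any genuine obstacle here, as the statement is a one-line consequence of \cref{thm:EQN-char-SMM} once one notes that the obedience hypothesis of that theorem is free for commitment outcomes. The only point worth stating explicitly is that, unlike in the main model (cf.\ \cref{thm:det_cmt_outcome}), no determinism requirement appears: the richness of the stochastic message mapping allows the construction in the proof of \cref{thm:EQN-char-SMM} to purify any nondeterministic IC and obedient outcome, so an IC commitment outcome need not be deterministic to be implementable. This is precisely the contrast the corollary is meant to highlight.
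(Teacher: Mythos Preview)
Your proposal is correct and matches the paper's own reasoning: the paper simply states that \cref{cor:SMM-comm-outcome} is a direct consequence of \cref{thm:EQN-char-SMM}, and your argument spells out exactly that, noting that obedience is automatic for any commitment outcome so only IC remains. The additional remark about determinism being unnecessary is accurate commentary but not part of the proof itself.
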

\begin{corollary} \label{cor:SMM-binary}
    If $\Theta$ is finite and $|J| = 2$, then every commitment outcome is an equilibrium outcome of the SMM game.
\end{corollary}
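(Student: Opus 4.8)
The plan is to reduce the claim to an incentive-compatibility statement and then verify that statement directly. By \cref{cor:SMM-comm-outcome}, a commitment outcome is an equilibrium outcome of the SMM game if and only if it is IC, so it suffices to show that when $|J|=2$ and $\Theta$ is finite, \emph{every} commitment outcome is IC. Recall from the binary discussion preceding \cref{prop:2actionsIC} that an outcome is IC if and only if it recommends action $2$ with probability one in every ``good'' state, i.e., $\overline{\psi}(2\sep\theta)=1$ whenever $\theta\notin A_1$ (equivalently, whenever $\delta(\theta)>0$). Thus the entire task is to prove that any maximizer $\overline{\psi}$ of \eqref{eqn:SPcommitment_outcome} already has this property.

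The core argument is a single-state improvement. Suppose toward a contradiction that some good state $\theta_0$ (with $\delta(\theta_0)>0$) has $\overline{\psi}(1\sep\theta_0)>0$, and define a modified outcome $\widetilde{\psi}$ that agrees with $\overline{\psi}$ everywhere except that $\widetilde{\psi}(2\sep\theta_0)=1$ and $\widetilde{\psi}(1\sep\theta_0)=0$. I would then check three things. First, since $v$ is strictly increasing and $\mu_0(\theta_0)>0$, shifting all mass in state $\theta_0$ to the high action strictly raises the ex-ante payoff: $V_{\widetilde{\psi}}-V_{\overline{\psi}}=(v(2)-v(1))\,\overline{\psi}(1\sep\theta_0)\,\mu_0(\theta_0)>0$. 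Second, the obedience constraint for action $2$, namely $\sum_{\theta}\delta(\theta)\overline{\psi}(2\sep\theta)\mu_0(\theta)\ge 0$, only becomes slacker, because its left-hand side increases by $\delta(\theta_0)\overline{\psi}(1\sep\theta_0)\mu_0(\theta_0)>0$. Third, the obedience constraint for action $1$, which in the binary case reads $\sum_{\theta}\delta(\theta)\overline{\psi}(1\sep\theta)\mu_0(\theta)\le 0$, also remains satisfied, since removing the nonnegative term $\delta(\theta_0)\overline{\psi}(1\sep\theta_0)\mu_0(\theta_0)$ only decreases its left-hand side. Hence $\widetilde{\psi}$ is feasible for \eqref{eqn:SPcommitment_outcome} with a strictly higher value, contradicting the optimality of $\overline{\psi}$. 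Therefore $\overline{\psi}(2\sep\theta)=1$ in every good state, so $\overline{\psi}$ is IC, and \cref{cor:SMM-comm-outcome} delivers the conclusion.

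The step I expect to require the most care is verifying that both obedience constraints move in the favorable direction \emph{simultaneously}; this is where the binary structure does the real work. The two constraints are aligned precisely because, with only two actions, reallocating probability from action $1$ to action $2$ in a state with $\delta(\theta_0)>0$ loosens the action-$2$ constraint and tightens (in the correct direction) the action-$1$ constraint at the same time---an alignment that would generally fail with three or more actions. One could instead invoke the Alonso--C\'amara cutoff characterization used in the proof of \cref{prop:binary_finite_approx}, under which every commitment outcome pools all states with $\delta(\theta)>\delta(\theta^*)$ (for a cutoff with $\delta(\theta^*)<0$) on action $2$; but that route imposes the genericity condition \eqref{eqn:alonso}, whereas the single-state improvement above needs no such assumption and simultaneously reproves the finite-$\Theta$ case of \cref{prop:2actionsIC}.
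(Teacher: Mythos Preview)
Your argument is correct. The overall reduction---via \cref{cor:SMM-comm-outcome} to the claim that every binary-action commitment outcome is IC---matches the paper's proof, which invokes \cref{thm:EQN-char-SMM} and then appeals to the fact (cited from \cite{Alonso2016} and Lemma~B.2 of \cite{KoesslerSkreta2023}) that every commitment outcome is IC when $|J|=2$.

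Where you differ is in how you establish that fact. The paper outsources it to the Alonso--C\'amara cutoff characterization and the KS lemma, whereas you give a direct single-state improvement: if some good state $\theta_0$ had $\overline{\psi}(1\sep\theta_0)>0$, shifting that mass to action~$2$ strictly raises $V$ while relaxing both obedience constraints, contradicting optimality. This is more elementary and self-contained; in particular, as you note, it avoids the genericity assumption \eqref{eqn:alonso} that the Alonso--C\'amara route carries. The cited results buy a sharper structural description of commitment outcomes (the cutoff form), but for the present corollary your perturbation argument is all that is needed and arguably cleaner.
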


\cref{cor:SMM-comm-outcome} is a direct consequence of \cref{thm:EQN-char-SMM}. \cref{cor:SMM-binary} follows from \cref{thm:EQN-char-SMM} and the fact that every commitment outcome is IC when R has two actions (see \citenop{Alonso2016} and our discussion after \cref{prop:2actionsIC}).

The set of equilibrium outcomes in the SMM game coincides with the set of IO outcomes found in KS if S's value function is quasiconvex in R's belief (KS Proposition 3), or when R chooses between two actions (KS Proposition 4). Generally, the set of IO outcomes is a subset of the set of equilibrium outcomes in KS, because IO imposes a stronger restriction on off-path beliefs than our equilibrium concept.

\section{Conclusion} \label{s: conclusion}

This paper examined a persuasion game with verifiable information in which a sender with transparent motives chooses which verifiable messages to send to a receiver in order to convince her to take a particular action from a finite set.
We showed that every equilibrium outcome must be incentive-compatible for the sender and obedient for the receiver. 
If an outcome is deterministic, then these conditions are both necessary and sufficient for it to be an equilibrium outcome. 
We also identified sufficient conditions under which the ex-ante commitment assumption in Bayesian persuasion can be replaced by communication with verifiable information. 
We showed that if the state space is rich, then a deterministic commitment outcome always exists; this commitment outcome is an equilibrium outcome if and only if the sender receives at least his complete information payoff in every state. 
If the receiver chooses between two actions, this condition is automatically satisfied.
We hope these results prove useful in applied settings.

\begin{singlespace}
    \setlength\bibitemsep{5pt}
    \printbibliography

@article{KoesslerRenault2012,
   title = {When {Does} a {Firm} {Disclose} {Product} {Information}?},
   volume = {43},
   issn = {0741-6261},
   url = {https://www.jstor.org/stable/41723348},
   abstract = {A firm chooses a price and the product information it discloses to a consumer whose tastes are privately known. We provide a necessary and sufficient condition on the match function for full disclosure to be the unique equilibrium outcome whatever the costs and prior beliefs about product and consumer types. It allows for products with different qualities as well as some horizontal match heterogeneity. With independently distributed product and consumer types, full disclosure is always an equilibrium and a necessary and sufficient equilibrium condition is that all firm types earn at least the full-disclosure profit.},
   number = {4},
   urldate = {2023-11-10},
   journal = {The RAND Journal of Economics},
   author = {Koessler, Frédéric and Renault, Régis},
   year = {2012},
   pages = {630--649},
   file = {JSTOR Full Text PDF:C\:\\Users\\titovam\\Zotero\\storage\\CSSZX2AG\\Koessler and Renault - 2012 - When does a firm disclose product information.pdf:application/pdf},
}

@article{Zapechelnyuk2023,
   title = {On the {Equivalence} of {Information} {Design} by {Uninformed} and {Informed} {Principals}},
   volume = {76},
   issn = {1432-0479},
   url = {https://doi.org/10.1007/s00199-023-01495-z},
   doi = {10.1007/s00199-023-01495-z},
   abstract = {We compare information design, or Bayesian persuasion, by an uninformed principal (who has no information about the state of the world when making her choice) and by an informed principal (who has private information and can condition her choice on that information). We show that, under the assumptions of monotone preferences of the principal and nondegenerate information structures, a Pareto undominated outcome is implementable by the uninformed principal if and only if it is implementable by the informed principal.},
   number = {4},
   urldate = {2023-11-08},
   journal = {Economic Theory},
   author = {Zapechelnyuk, Andriy},
   month = nov,
   year = {2023},
   keywords = {Bayesian persuasion, D82, D83, Implementation, Information design, Informed principal, Sequential equilibrium},
   pages = {1051--1067},
   file = {Full Text PDF:C\:\\Users\\titovam\\Zotero\\storage\\BS8F2EH7\\Zapechelnyuk - 2023 - On the equivalence of information design by uninfo.pdf:application/pdf},
}

@article{KoesslerSkreta2023,
   title = {Informed Information Design},
   issn = {0022-3808},
   url = {https://www.journals.uchicago.edu/doi/abs/10.1086/724843},
   doi = {10.1086/724843},
   journaltitle = {Journal of Political Economy},
   author = {Koessler, Fr{\'e}d{\'e}ric and Skreta, Vasiliki},
   urldate = {2023-07-06},
   date = {2023-03-28},
   file = {Submitted Version:/Users/titovam/Zotero/storage/VLMYVNGJ/Koessler and Skreta - 2023 - Informed Information Design.pdf:application/pdf},
}

@article{GlazerRubinstein2006,
   abstract = {A speaker wishes to persuade a listener to take a certain action. The conditions under which the request is justified, from the listener's point of view, depend on the state of the world, which is known only to the speaker. Each state is characterized by a set of statements from which the speaker chooses. A persuasion rule specifies which statements the listener finds persuasive. We study persuasion rules that maximize the probability that the listener accepts the request if and only if it is justified, given that the speaker maximizes the probability that his request is accepted. We prove that there always exists a persuasion rule involving no randomization and that all optimal persuasion rules are ex-post optimal. We relate our analysis to the field of pragmatics.},
   author = {Jacob Glazer and Ariel Rubinstein},
   doi = {j.ctt46mwfz.9},
   journal = {Theoretical Economics},
   keywords = {D82,D83,Persuasion,hard evidence,mechanism design,pragmatics JEL CLASSIFICATION C61},
   pages = {395-410},
   title = {A Study in the Pragmatics of Persuasion: a Game Theoretical Approach},
   volume = {1},
   url = {http://econtheory.org.},
   year = {2006},
}

@article{GlazerRubinstein2004,
   abstract = {A speaker wishes to persuade a listener to accept a certain request. The conditions under which the request is justified, from the listener's point of view, depend on the values of two aspects. The values of the aspects are known only to the speaker and the listener can check the value of at most one. A mechanism specifies a set of messages that the speaker can send and a rule that determines the listener's response, namely, which aspect he checks and whether he accepts or rejects the speaker's request. We study mechanisms that maximize the probability that the listener accepts the request when it is justified and rejects the request when it is unjustified, given that the speaker maximizes the probability that his request is accepted. We show that a simple optimal mechanism exists and can be found by solving a linear programming problem in which the set of constraints is derived from what we call the L-principle.},
   author = {Jacob Glazer and Ariel Rubinstein},
   doi = {10.1111/j.1468-0262.2004.00551.x},
   issn = {0012-9682},
   issue = {6},
   journal = {Econometrica},
   keywords = {Persuasion,debates,hard evidence,mechanism design},
   month = {11},
   pages = {1715--1736},
   title = {On Optimal Rules of Persuasion},
   volume = {72},
   url = {http://doi.wiley.com/10.1111/j.1468-0262.2004.00551.x},
   year = {2004},
}

@article{HartEtAl2017,
   abstract = {<p>An evidence game is a strategic disclosure game in which an informed agent who has some pieces of verifiable evidence decides which ones to disclose to an uninformed principal who chooses a reward. The agent, regardless of his information, prefers the reward to be as high as possible. We compare the setup in which the principal chooses the reward after the evidence is disclosed to the mechanism-design setup where he can commit in advance to a reward policy, and show that under natural conditions related to the evidence structure and the inherent prominence of truth, the two setups yield the same outcome. (JEL C72, D82, D83, K41)</p>},
   author = {Sergiu Hart and Ilan Kremer and Motty Perry},
   doi = {10.1257/aer.20150913},
   issn = {0002-8282},
   issue = {3},
   journal = {American Economic Review},
   month = {3},
   pages = {690--713},
   publisher = {American Economic Association},
   title = {Evidence Games: Truth and Commitment},
   volume = {107},
   url = {https://pubs.aeaweb.org/doi/10.1257/aer.20150913},
   year = {2017},
}

@article{Sher2011,
   abstract = {This paper studies a game of persuasion. A speaker attempts to persuade a listener to take an action by presenting evidence. Glazer and Rubinstein (2006) showed that when the listener's decision is binary, neither randomization nor commitment have any value for the listener, and commented that the binary nature of the decision was important for the commitment result. In this paper, I show that concavity is the critical assumption for both results: no value to commitment and no value to randomiza-tion. Specifically, the key assumption is that the listener's utility function is a concave transformation of the speaker's utility function. This assumption holds vacuously in the binary model. The result that concavity implies credibility allows us to dispense with the assumption that the listener's decision is binary and significantly broadens the scope of the model. JEL Classification: C72, D82, D83.},
   author = {Itai Sher},
   doi = {10.1016/j.geb.2010.05.008},
   issn = {08998256},
   issue = {2},
   journal = {Games and Economic Behavior},
   keywords = {commitment,concavity,credibility,determinism,evidence *,optimal persuasion rules},
   month = {3},
   pages = {409--419},
   title = {Credibility and Determinism in a Game of Persuasion},
   volume = {71},
   url = {https://linkinghub.elsevier.com/retrieve/pii/S0899825610000953},
   year = {2011},
}

@article{DvoretzkyEtAl1951,
   author = {A. Dvoretzky and A. Wald and J. Wolfowitz},
   doi = {10.1214/aoms/1177729689},
   issn = {0003-4851},
   issue = {1},
   journal = {The Annals of Mathematical Statistics},
   month = {3},
   pages = {1-21},
   title = {Elimination of Randomization in Certain Statistical Decision Procedures and Zero-Sum Two-Person Games},
   volume = {22},
   url = {http://projecteuclid.org/euclid.aoms/1177729689},
   year = {1951},
}

@article{Bardhi2018,
    abstract     = {A fully committed sender seeks to sway a collective adoption decision by multiple voters with correlated payoff states and heterogeneous thresh- olds of doubt through designing experiments. We characterize the sender- optimal policy under unanimity rule for two main persuasion modes. Under general persuasion, the sender makes the most demanding voters indifferent between decisions, while the more lenient voters strictly benefit from per- suasion. Under individual persuasion, the sender designates a subgroup of rubber-stampers, another of perfectly informed voters, and a third of partially informed voters. The most demanding voters are strategically accorded high-quality information. In contrast, under nonunanimous rules, general persuasion guarantees a sure adoption, while individual persuasion does not; voters prefer the latter due to the partial check they have on adoption.},
    author       = {Bardhi, Arjada and Guo, Yingni},
    url          = {https://econtheory.org/ojs/index.php/te/article/view/2834},
    date         = {2018},
    doi          = {10.3982/TE2834},
    file         = {:D$\backslash$:/Documents/Papers/Bardhi, Guo - 2018 - Modes of Persuasion toward Unanimous Consent.pdf:pdf},
    issn         = {1933-6837},
    journaltitle = {Theoretical Economics},
    keywords     = {D71,D82,D83,G28,Information design,K20,O32,collective decision-making,information guard,unanimity rule},
    number       = {3},
    pages        = {1111--1149},
    title        = {{Modes of Persuasion toward Unanimous Consent}},
    volume       = {13},
}

@article{Milgrom1986,
    author       = {Milgrom, Paul R. and Roberts, John},
    url          = {http://doi.wiley.com/10.2307/2555625},
    date         = {1986},
    doi          = {10.2307/2555625},
    file         = {:D$\backslash$:/Documents/Papers/Milgrom, Roberts - 1986 - Relying on the Information of Interested Parties.pdf:pdf},
    issn         = {07416261},
    journaltitle = {The RAND Journal of Economics},
    keywords     = {1999,also,antigen,bola,bola-drb3,bovine,called the bovine leucocyte,complex,genotyping,includes many immune-related genes,lewin et al,mhc,of cattle,the major histocompatibility complex},
    number       = {1},
    pages        = {18},
    title        = {{Relying on the Information of Interested Parties}},
    volume       = {17},
}

@article{Chakraborty2010,
    abstract     = {We consider the credibility, persuasiveness, and informativeness of multidimensional cheap talk by an expert to a decision maker. We find that an expert with state-independent preferences can always make credible comparative statements that trade off the expert's incentive to exaggerate on each dimension. Such communication benefits the expert-cheap talk is "persuasive"-if her preferences are quasiconvex. Communication benefits a decision maker by allowing for a more informed decision, but strategic interactions between multiple decision makers can reverse this gain. We apply these results to topics including product recommendations, voting, auction disclosure, and advertising.},
    author       = {Chakraborty, Archishman and Harbaugh, Rick},
    date         = {2010},
    doi          = {10.1257/aer.100.5.2361},
    file         = {:D$\backslash$:/Documents/Papers/Chakraborty, Harbaugh - 2010 - Persuasion by cheap talk.pdf:pdf},
    issn         = {00028282},
    journaltitle = {American Economic Review},
    number       = {5},
    pages        = {2361--2382},
    title        = {{Persuasion by {Cheap} {Talk}}},
    volume       = {100},
}

@article{Ostrovsky2010,
    abstract     = {This paper explores information disclosure in matching markets. A school may suppress some information about students in order to improve their average job placement. We consider a setting with many schools, students, and jobs, and show that if early contracting is impossible, the same, “balanced” amount of information is disclosed in essentially all equilibria. When early contracting is allowed and information arrives gradually, if schools disclose the balanced amount of information, students and employers will not find it profitable to contract early. If they disclose more, some students and employers will prefer to sign contracts before all information is revealed. (JEL C78, D82, D83)},
    author       = {Ostrovsky, Michael and Schwarz, Michael},
    url          = {https://pubs.aeaweb.org/doi/10.1257/mic.2.2.34},
    date         = {2010-05},
    doi          = {10.1257/mic.2.2.34},
    file         = {:D$\backslash$:/Documents/Papers/Ostrovsky, Schwarz - 2010 - Information Disclosure and Unraveling in Matching Markets.pdf:pdf},
    issn         = {1945-7669},
    journaltitle = {American Economic Journal: Microeconomics},
    number       = {2},
    pages        = {34--63},
    title        = {{Information Disclosure and Unraveling in Matching Markets}},
    volume       = {2},
}

@article{Lipnowski2020,
    abstract     = {We study a model of cheap talk with one substantive assumption: The sender's preferences are state independent. Our main observation is that such a sender gains credibility by degrading self‐serving information. Using this observation, we examine the sender's benefits from communication, assess the value of commitment, and explicitly solve for sender‐optimal equilibria in three examples. A key result is a geometric characterization of the value of cheap talk, described by the quasi concave envelope of the sender's value function.},
    author       = {Lipnowski, Elliot and Ravid, Doron},
    url          = {https://www.econometricsociety.org/doi/10.3982/ECTA15674},
    date         = {2020},
    doi          = {10.3982/ECTA15674},
    file         = {:D$\backslash$:/Documents/Papers/Lipnowski, Ravid - 2020 - Cheap Talk With Transparent Motives.pdf:pdf},
    issn         = {0012-9682},
    journaltitle = {Econometrica},
    keywords     = {Cheap talk,belief-based approach,information design,information transmission,persuasion,quasiconcave envelope,securability},
    number       = {4},
    pages        = {1631--1660},
    title        = {{Cheap Talk With Transparent Motives}},
    volume       = {88},
}

@article{Milgrom2008,
    abstract     = {Imagine that you are considering an investment in a new public offering of a firm's shares. The firm's officers make a presentation that includes an audited financial statement, an earnings forecast reviewed by its prestigious investment bankers, and an impressive demonstration of its new technology. Or suppose that you are buying a new furnace to replace an old one that is not working well. The salesman displays a chart showing that the projected total life-cycle cost of one particular model, including capital costs and fuel usage over the projected lifetime of the furnace, is lower than that of some competing models you have considered. This paper reviews the theoretical arguments about how sellers disclose information in an attempt to encourage buyers, and the potential role for regulation in encouraging efficient disclosure of information. How well does a system of private reporting work? When should we expect all the relevant information to be reported? If testing and reporting by the seller are costly, will too little testing and reporting be done? Or too much? When some information is withheld, what sort of information is withheld? How do rational buyers respond to such withholding? How are prices and welfare affected? What role is there for laws and regulations to improve the functioning of markets? We address these questions by studying the theory of persuasion games—games in which one or more sellers provide verifiable information to buyers to influence the actions they take.},
    author       = {Milgrom, Paul R.},
    url          = {https://voprecotest.elpub.ru/jour/article/view/851 https://pubs.aeaweb.org/doi/10.1257/jep.22.2.115},
    date         = {2008-03},
    doi          = {10.1257/jep.22.2.115},
    file         = {:D$\backslash$:/Documents/Papers/Milgrom - 2008 - What the Seller Won't Tell You Persuasion and Disclosure in Markets(2).pdf:pdf},
    issn         = {0895-3309},
    journaltitle = {Journal of Economic Perspectives},
    keywords     = {Asymmetric information,Game theory,Microeconomics,Strategic interactions},
    number       = {2},
    pages        = {115--131},
    title        = {{What the Seller Won't Tell You: Persuasion and Disclosure in Markets}},
    volume       = {22},
}

@article{Kamenica2011,
    author       = {Kamenica, Emir and Gentzkow, Matthew},
    url          = {http://www.ncbi.nlm.nih.gov/pubmed/10525498 http://pubs.aeaweb.org/doi/10.1257/aer.101.6.2590},
    date         = {2011-10},
    doi          = {10.1257/aer.101.6.2590},
    file         = {:D$\backslash$:/Documents/Papers/Kamenica, Gentzkow - 2011 - Bayesian Persuasion.pdf:pdf},
    issn         = {0002-8282},
    journaltitle = {American Economic Review},
    keywords     = {Aged,Aneurysm,Aorta,Aortic Aneurysm,Aortic Aneurysm: complications,Aortic Aneurysm: diagnosis,Aortic Rupture,Aortic Rupture: complications,Aortic Rupture: diagnosis,Bayes Theorem,Cardiac Tamponade,Cardiac Tamponade: complications,Cardiac Tamponade: diagnosis,Cardiogenic,Cardiogenic: etiology,Chest Pain,Chest Pain: etiology,Diagnosis,Differential,Dissecting,Dissecting: complications,Dissecting: diagnosis,Female,Hematoma,Hematoma: complications,Hematoma: diagnosis,Humans,Hypotension,Hypotension: etiology,Shock,Syncope,Syncope: etiology,Thoracic,Tomography,X-Ray Computed},
    number       = {6},
    pages        = {2590--2615},
    title        = {{Bayesian Persuasion}},
    volume       = {101},
}

@article{Alonso2016,
    author       = {Alonso, Ricardo and Câmara, Odilon},
    url          = {http://pubs.aeaweb.org/doi/10.1257/aer.20140737},
    date         = {2016-11},
    doi          = {10.1257/aer.20140737},
    file         = {:D$\backslash$:/Documents/Papers/Alonso, C{\^{a}}mara - 2016 - Persuading Voters.pdf:pdf},
    issn         = {0002-8282},
    journaltitle = {American Economic Review},
    number       = {11},
    pages        = {3590--3605},
    title        = {{Persuading Voters}},
    volume       = {106},
}

@unpublished{Lipnowski2020b,
    author       = {Lipnowski, Elliot},
    date         = {2020},
    note = {working paper},
    title        = {{Equivalence of Cheap Talk and Bayesian Persuasion in a Finite Continuous Model}},
}

@article{Gentzkow2016,
    abstract     = {Rothschild and Stiglitz (1970) represent random variables as convex functions (integrals of the cumulative distribution function). Combining this representation with Blackwell's Theorem (1953), we characterize distributions of posterior means that can be induced by a signal. This characterization provides a novel way to analyze a class of Bayesian persuasion problems.},
    author       = {Gentzkow, Matthew and Kamenica, Emir},
    url          = {https://pubs.aeaweb.org/doi/10.1257/aer.p20161049},
    date         = {2016-05},
    doi          = {10.1257/aer.p20161049},
    file         = {:D$\backslash$:/Documents/Papers/Gentzkow, Kamenica - 2016 - A Rothschild-Stiglitz Approach to Bayesian Persuasion.pdf:pdf},
    issn         = {0002-8282},
    journaltitle = {American Economic Review},
    keywords     = {bayesian persuasion,duality,lagrangian,mean-preserving spreads},
    number       = {5},
    pages        = {597--601},
    title        = {{A Rothschild-Stiglitz Approach to Bayesian Persuasion}},
    volume       = {106},
}

@article{Kolotilin2015,
    abstract     = {A sender chooses ex ante how information will be disclosed ex post. A receiver obtains public information and information disclosed by the sender. Then he takes one of two actions. The sender wishes to maximize the probability that the receiver takes the desired action. The sender optimally discloses only whether the receiver's utility is above a cutoff. I derive necessary and sufficient conditions for the sender's and receiver's welfare to be monotonic in information. In particular, the sender's welfare increases with the precision of the sender's information and decreases with the precision of public information.},
    author       = {Kolotilin, Anton},
    publisher    = {Elsevier Inc.},
    url          = {http://dx.doi.org/10.1016/j.geb.2015.02.006 https://linkinghub.elsevier.com/retrieve/pii/S0899825615000287},
    date         = {2015-03},
    doi          = {10.1016/j.geb.2015.02.006},
    file         = {:D$\backslash$:/Documents/Papers/Kolotilin - 2015 - Experimental Design to Persuade.pdf:pdf},
    issn         = {08998256},
    journaltitle = {Games and Economic Behavior},
    keywords     = {Information disclosure,Persuasion,Stochastic orders},
    pages        = {215--226},
    title        = {{Experimental Design to Persuade}},
    volume       = {90},
}

@article{BullWatson2007,
    abstract     = {This paper addresses how hard evidence can be incorporated into mechanism-design analysis. Two classes of models are compared: (a) ones in which evidentiary decisions are accounted for explicitly, and (b) ones in which the players make abstract declarations of their types. Conditions are provided under which versions of these models are equivalent. The paper also addresses whether dynamic mechanisms are required for Nash implementation in settings with hard evidence. The paper shows that static mechanisms suffice in the setting of "evidentiary normality" and that, in a more general environment, one can restrict attention to a class of three-stage dynamic mechanisms.},
    author       = {Bull, Jesse and Watson, Joel},
    date         = {2007},
    doi          = {10.1016/j.geb.2006.03.003},
    file         = {:D$\backslash$:/Documents/Papers/Bull, Watson - 2007 - Hard Evidence and Mechanism Design.pdf:pdf},
    issn         = {08998256},
    journaltitle = {Games and Economic Behavior},
    number       = {1},
    pages        = {75--93},
    title        = {{Hard Evidence and Mechanism Design}},
    volume       = {58},
}

@article{Gehlbach2014,
    abstract     = {We present a formal model of government control of the media to illuminate variation in media freedom across countries and over time. Media bias is greater and state ownership of the media more likely when the government has a particular interest in mobilizing citizens to take actions that further some political objective but are not necessarily in citizens' individual best interest; however, the distinction between state and private media is smaller. Large advertising markets reduce media bias in both state and private media but increase the incentive for the government to nationalize private media. Media bias in state and private media markets diverge as governments become more democratic, whereas media bias in democracies and autocracies converge as positive externalities from mobilization increase.},
    author       = {Gehlbach, Scott and Sonin, Konstantin},
    publisher    = {Elsevier B.V.},
    url          = {http://dx.doi.org/10.1016/j.jpubeco.2014.06.004 https://linkinghub.elsevier.com/retrieve/pii/S0047272714001443},
    date         = {2014-10},
    doi          = {10.1016/j.jpubeco.2014.06.004},
    file         = {:D$\backslash$:/Documents/Papers/Gehlbach, Sonin - 2014 - Government Control of the Media.pdf:pdf},
    issn         = {00472727},
    journaltitle = {Journal of Public Economics},
    keywords     = {Formal theory,Media,Political economy},
    pages        = {163--171},
    title        = {{Government Control of the Media}},
    volume       = {118},
}

@article{Grossman1981,
    author       = {Grossman, Sanford J.},
    url          = {http://www.journals.uchicago.edu/doi/10.1086/466995 https://www.journals.uchicago.edu/doi/10.1086/466995},
    date         = {1981-12},
    doi          = {10.1086/466995},
    file         = {:D$\backslash$:/Documents/Papers/Grossman - 1981 - The Informational Role of Warranties and Private Disclosure about Product Quality.pdf:pdf},
    issn         = {0022-2186},
    journaltitle = {The Journal of Law and Economics},
    number       = {3},
    pages        = {461--483},
    title        = {{The Informational Role of Warranties and Private Disclosure about Product Quality}},
    volume       = {24},
}

@article{Milgrom1981,
    abstract     = {This is an article about modeling methods in information economics. A notion of "favorableness" of news is introduced, characterized, and applied to four simple models. In the equilibria of these models, (1) the arrival of good news about a firm's prospects always causes its share price to rise, (2) more favorable evidence about an agent's effort leads the principal to pay a larger bonus, (3) buyers expect that any product information withheld by a salesman is unfavorable to his product, and (4) bidders figure that low bids by their competitors signal a low value for the object being sold.},
    author       = {Milgrom, Paul R.},
    url          = {https://www.jstor.org/stable/3003562?origin=crossref http://www.jstor.org/stable/3003562?origin=crossref},
    date         = {1981},
    doi          = {10.2307/3003562},
    file         = {:D$\backslash$:/Documents/Papers/Milgrom - 1981 - Good News and Bad News Representation Theorems and Applications.pdf:pdf},
    issn         = {0361915X},
    journaltitle = {The Bell Journal of Economics},
    number       = {2},
    pages        = {380},
    title        = {{Good News and Bad News: Representation Theorems and Applications}},
    volume       = {12},
}

@article{Ben-Porath2019,
    abstract     = {We show that in a class of I ‐agent mechanism design problems with evidence, commitment is unnecessary, randomization has no value, and robust incentive compatibility has no cost. In particular, for each agent i , we construct a simple disclosure game between the principal and agent i where the equilibrium strategies of the agents in these disclosure games give their equilibrium strategies in the game corresponding to the mechanism but where the principal is not committed to his response. In this equilibrium, the principal obtains the same payoff as in the optimal mechanism with commitment. As an application, we show that certain costly verification models can be characterized using equilibrium analysis of an associated model of evidence.},
    author       = {Ben-Porath, Elchanan and Dekel, Eddie and Lipman, Barton L.},
    url          = {https://www.econometricsociety.org/doi/10.3982/ECTA14991},
    date         = {2019},
    doi          = {10.3982/ECTA14991},
    file         = {:D$\backslash$:/Documents/Papers/Ben-Porath, Dekel, Lipman - 2019 - Mechanisms With Evidence Commitment and Robustness.pdf:pdf},
    issn         = {0012-9682},
    journaltitle = {Econometrica},
    number       = {2},
    pages        = {529--566},
    title        = {{Mechanisms With Evidence: Commitment and Robustness}},
    volume       = {87},
}

@article{Boleslavsky2015,
    abstract     = {We consider school competition in a Bayesian persuasion framework. Schools compete to place graduates by investing in education quality and by choosing grading policies. In equilibrium, schools strategically adopt grading policies that do not perfectly reveal graduate ability to evaluators. We compare outcomes when schools grade strategically to outcomes when evaluators perfectly observe graduate ability. With strategic grading, grades are less informative, and evaluators rely less on grades and more on a school's quality when assessing graduates. Consequently, under strategic grading, schools have greater incentive to invest in quality, and this can improve evaluator welfare. (JEL D82, I21, I23)},
    author       = {Boleslavsky, Raphael and Cotton, Christopher},
    url          = {https://pubs.aeaweb.org/doi/10.1257/mic.20130080},
    date         = {2015-05},
    doi          = {10.1257/mic.20130080},
    file         = {:D$\backslash$:/Documents/Papers/Boleslavsky, Cotton - 2015 - Grading Standards and Education Quality.pdf:pdf},
    issn         = {1945-7669},
    journaltitle = {American Economic Journal: Microeconomics},
    number       = {2},
    pages        = {248--279},
    title        = {{Grading Standards and Education Quality}},
    volume       = {7},
}

@unpublished{Zhang2022,
      title={Withholding Verifiable Information}, 
      author={Kun Zhang},
      year={2022},
      note={working paper}, 
}

@unpublished{GieczewskiTitova,
      title={Coalition-Proof Disclosure}, 
      author={Germ{\'a}n Gieczewski and Maria Titova},
      year={2024},
      note={working paper}, 
}

@article{Grossman1980,
	author = {Sanford J. Grossman and Oliver D. Hart},
	journal = {The Journal of Finance},
	number = {2},
	pages = {323--334},
	title = {Disclosure Laws and Takeover Bids},
	volume = {35},
	year = {1980}}

@article{Dranove2010,
	author = {David Dranove and Ginger Zhe Jin},
	journal = {Journal of Economic Literature},
	number = {4},
	pages = {935--963},
	title = {Quality Disclosure and Certification: Theory and Practice},
	volume = {48},
	year = {2010}}

@unpublished{Ali2024,
  title={From Design to Disclosure},
  author={S. Nageeb Ali and Andreas Kleiner and Kun Zhang},
  note={working paper},
  year={2024}
}

@article{PerezRichet2014,
  title={Interim Bayesian Persuasion: First Steps},
  author={Perez-Richet, Eduardo},
  journal={American Economic Review},
  volume={104},
  number={5},
  pages={469--474},
  year={2014}
}

@article{Terstiege2023,
  title={Experiments Versus Distributions of Posteriors},
  author={Terstiege, Stefan and Wasser, C{\'e}dric},
  journal={Mathematical Social Sciences},
  volume={125},
  pages={58--60},
  year={2023}
}

@article{Dye1985,
  title={{Disclosure of Nonproprietary Iinformation}},
  author={Dye, Ronald A},
  journal={{Journal of Accounting Research}},
  pages={123--145},
  year={1985}
}

@article{KamenicaLin2024,
  title={Commitment and Randomization in Communication},
  author={Kamenica, Emir and Lin, Xiao},
  journal={arXiv preprint arXiv:2410.17503},
  year={2024}
}
\end{singlespace}
\end{document}